\newtheorem{theorem}{Theorem}[section]
\newtheorem{lemma}[theorem]{Lemma}
\newtheorem{remark}[theorem]{Remark}
\newtheorem{proposition}{Proposition}
\newtheorem{definition}{Definition}
\newcommand{\blind}{1}
\begin{document}

\def\spacingset#1{\renewcommand{\baselinestretch}%
{#1}\small\normalsize} \spacingset{1}


\if1\blind
{
\title{\bf Generalized inverse-Gaussian frailty models with application to TARGET neuroblastoma data}
\author{Luiza S.C. Piancastelli$^{\star}$$^\sharp$\footnote{E-mail: \texttt{luiza.piancastelli@ucdconnect.ie}},\,\, Wagner Barreto-Souza$^\ast$$^\sharp$\footnote{Corresponding author. E-mail: \texttt{wagner.barretosouza@kaust.edu.sa}}\,\, and\, Vin\'icius D. Mayrink$^\sharp$\footnote{E-mail: \texttt{vdm@est.ufmg.br}}\\
	\small$^\star$\it School of Mathematics and Statistics, University College Dublin, Dublin, Ireland\\
	\small$^\ast$\it Statistics Program, King Abdullah University of Science and Technology, Thuwal, Saudi Arabia\\
	\small$^{\sharp}$\it Departamento de Estat\'istica, Universidade Federal de Minas Gerais, Belo Horizonte, Brazil}
  \maketitle
} \fi
\if0\blind
{
  \bigskip
  \bigskip
  \bigskip
  \begin{center}
    {\LARGE\bf Generalized inverse-Gaussian frailty models with application to TARGET neuroblastoma data}
\end{center}
  \medskip
} \fi

\bigskip
\addtocontents{toc}{\protect\setcounter{tocdepth}{1}}

\begin{abstract}
A new class of survival frailty models based on the Generalized Inverse-Gaussian (GIG) distributions is proposed. We show that the GIG frailty models are flexible and mathematically convenient like the popular gamma frailty model. Furthermore, our proposed class is robust and does not present some computational issues experienced by the gamma model. By assuming a piecewise-exponential baseline hazard function, which gives a semiparametric flavour for our frailty class, we propose an EM-algorithm for estimating the model parameters and provide an explicit expression for the information matrix. Simulated results are addressed to check the finite sample behavior of the EM-estimators and also to study the performance of the GIG models under misspecification. We apply our methodology to a TARGET (Therapeutically Applicable Research to Generate Effective Treatments) data about survival time of patients with neuroblastoma cancer and show some advantages of the GIG frailties over existing models in the literature.
\end{abstract}
{\it \textbf{Keywords}:} EM-algorithm; Frailty; Generalized inverse-Gaussian models; Neuroblastoma; Robustness.

\vfill

\section{Introduction}
\label{sec1}
When dealing with time to event data, the most popular statistical approach is the proportional hazards model by \cite{cox72}. This model is based on the hazard function and accommodates well censored and truncated data, which are key elements in survival analysis. 

One situation in which the proportional hazards model can be deficient occurs when unobserved sources of heterogeneity are present in the data. This might be explained by the lack of important covariates in the study, which are difficult to measure or were not collected because the researcher did not know its importance in the first place. In this case, the deficiency of the proportional hazards model is the assumption of a homogeneous population.  Another common situation in which the proportional hazards model is problematic occurs when there is correlated survival data. The correlation arises, for example, when repeated measures are collected for each individual or when some common traits such as biological or environmental factors are shared.

The described situations are usually treated by assuming a frailty model configured as a natural extension to the Cox model. This approach introduces a latent random component that acts multiplicatively in the hazard function for an individual or a group of individuals. The univariate frailty modeling can handle unobserved sources of heterogeneity and independence is assumed among individual frailties. One of the multivariate versions of this methodology known as shared frailty model was introduced by \cite{clayton}, which was motivated by the analysis of familial tendency in disease incidence. The author assumed that individuals in the same group share the frailty term and a positive dependence between those individuals is created.

In both univariate and multivariate versions, fitting the unobserved risk components is of most importance to properly evaluate the covariate effects. Some distributions in $\mathbb{R^+}$ are commonly assumed for the frailty term having certain desired mathematical properties, since the inferential aspects of the frailty models pose additional difficulties in comparison with usual mixed models due to censoring and truncation. The gamma distribution is the most common choice for this task. This is a model that became very popular due to its mathematical convenience, being explored by several authors such as \cite{vaupel}, \cite{oakes1}, \cite{oakes2}, \cite{klein} and \cite{yashin}. Other well known options in this field are the parametric inverse-Gaussian \citep{houg1984}, positive stable \citep{houg1986}, log-normal \citep{mcg} and power variance family \citep{cro1989,houetal1992} frailty models.

Semiparametric versions of a frailty model are often preferred in the literature, since it allows the estimation of regression effects without the need to explicitly impose a particular form for the baseline hazard function. Assuming a parametric formulation for this function can be restrictive, since the analysis is bounded by the possible shapes of the chosen function. In addition, the parametric formulation can be difficult to identify or to test for adequacy. There are different methods to develop semiparametric versions of frailty models. For instance, the approach introduced by \cite{klein}, which is based on a modified EM algorithm involving the Cox proportional hazards model. Other possible strategies are the penalized partial likelihood functions introduced by \cite{therneau}, piecewise constant hazards \citep{kim} with raising number of pieces and splines, among others. In this paper, we will focus on the piecewise constant hazards approach. Although this is a parametric model at all, it is quite flexible since we do not have to assume any specific form for the baseline hazard function, which gives a semiparametric flavour. For a discussion about the advantages of this choice, we refer \cite{lawless}. Other contributions on frailty modeling are due to \cite{balpen2006}, \cite{klein2012}, \cite{callegaro}, \cite{farrington}, \cite{unkfar2012}, \cite{chenetal2013}, \cite{enketal2014}, \cite{moletal2015}, \cite{puthou2015}, \cite{balpal2016}, \cite{leaetal2017}, \cite{unk2017} and \cite{barretosouza}, just to name a few. For a good account on frailty models, we recommend the books by \cite{hou2000}, \cite{wienke} and \cite{han2019}.

Our chief goal in this paper is to introduce a new class of frailty models based on the generalized inverse-Gaussian (GIG) distributions, which has
some advantages over existing models in the literature, as it will be shown along the paper. Our proposal praises for efficiency in terms of computation and flexibility  without compromising mathematical tractability, since all the main expressions have closed forms, like in the gamma model.

It is worth to mention that both GIG and power variance function (PVF) families are infinitely divisible (for instance, see respectively \cite{barhal1977} and \cite{han2019}), so that a natural question can arise: Is there some connection between them? In fact, they share the inverse-Gaussian model as a common member (in our case this model is obtained by taking $\lambda=-1/2$; see Remark \ref{gridlambda}) but the other cases are different; the Laplace transform of the GIG distributions given in Eq. (\ref{laplace}) of this paper is different from the PVF Laplace transform given in Subsection 5.5 from \cite{han2019}.

We now highlight some important contributions of this present paper as follows. {\it (i) Robustness.} In simulation studies we show that the GIG distribution ensures the flexibility we seek, as the model can return optimal results when data is generated with different frailty distributions. {\it (ii) Mathematical tractability.}  We provide closed forms for the unconditional density, survival and hazard functions related to the generalized inverse-Gaussian frailty model. Further, the conditional distribution of the frailty given the data is again GIG distributed. This kind of conjugacy, also true for the gamma case, is attractive and it allows us to provide an EM-algorithm with closed E-step. Therefore, our proposed frailty models enjoy the same mathematical and analytical tractability of the gamma model. We are unaware of other existing frailty model having all these features. {\it (iii) Cluster survival data.} Although the neuroblastoma data considered here is not clustered, we introduce our model in a more general setting allowing clustered data analysis. {\it (iv) Computational tractability.} Some numerical problems are experienced in the data analysis through gamma frailty model, in contrast with our proposed class of GIG models.

This paper is organized as follows. In Section \ref{spec}, we introduce the class of GIG frailty models and obtain some basic results. In Section \ref{clust}, we discuss maximum likelihood estimation under the parametric case and propose an EM-algorithm \citep{dempster} for estimating the parameters with focus in the piecewise exponential baseline assumption, where the key expressions and a detailed description are provided. In Section \ref{sec:simulation}, we present a comprehensive Monte Carlo simulation study, for the EM-approach, where the robustness of the GIG frailty model under misspecification is compared with other existing models in literature. Statistical analysis based on GIG frailty models to a TARGET (Therapeutically Applicable Research to Generate Effective Treatments) data about survival time of patients with neuroblastoma cancer is addressed in Section \ref{sec:analysis}. Discussion about our findings and some points to be attacked in future works are presented in Section \ref{discussion}.


\section{Model specification and basic results} \label{spec}

This section introduces the GIG frailty models. We begin by discussing the case without covariates. The inclusion of covariates will be addressed in next section. The Generalized Inverse-Gaussian distribution with parameters $\lambda \in \mathbb{R}$, $a > 0$ and $b>0$ has the following density
\begin{eqnarray} \label{gig_density}
g(x) = \frac{(a/b)^{\lambda/2}}{2 \ K_\lambda(\sqrt{ab})}x^{\lambda-1} \exp\{ -(ax + bx^{-1})/2\}, \quad \mbox{for} \quad x > 0,
\end{eqnarray}
where $K_\lambda(x) = \int_{0}^{\infty} u^{\lambda-1} \exp\{-\frac{t}{2}(u+u^{-1})\} \ du$ denotes the third kind modified Bessel function. A random variable $X$ with density function (\ref{gig_density}) is denoted by $X\sim\mbox{GIG}(a,b,\lambda)$.

Some basic properties of the GIG distribution are presented next. The Laplace transform associated to (\ref{gig_density}) is given by
\begin{eqnarray} \label{laplace}
L(t) = \frac{K_{\lambda}(\sqrt{(a+2t)b})}{K_{\lambda}(\sqrt{ab})} \left({\frac{a}{a+2t}} \right)^{\lambda/2}, \quad \mbox{for} \quad t>-a/2. 
\end{eqnarray}

The moments of a GIG distributed random variable $X$ are given by
\begin{eqnarray}\label{moments}
E(X^k) = \frac{K_{\lambda+k}(\sqrt{ab})}{K_{\lambda}(\sqrt{ab})}(b/a)^{k/2}, \quad k\in\mathbb R.
\end{eqnarray}

Having a closed form for the Laplace transform is an advantage in the construction of frailty models. This is justified by the fact that the marginal survival and density functions of the frailty model can be found through this expression. The fact that the Laplace transform of the gamma distribution has a closed form is one of the reasons for the popularity of such model. The main structure of our proposed class of frailty models is given in what follows.

\begin{definition} \label{def_frailty}
	Assume $Z$ is a random variable following a GIG distribution with parameters $a=\alpha^{-1}$, $b=\alpha^{-1}$ and $\lambda>0$, where $\alpha>0$; we denote $Z\sim \mbox{GIG}(\alpha^{-1},\alpha^{-1},\lambda)$. Consider the time to event of an individual to be denoted by a random variable $T$, where the latent effect $Z$ acts multiplicatively in the baseline hazard function. Conditional on $Z$, the hazard function is given by $h(t|Z) = Z \ h_0(t)$. Here $h_0(t)$ is the baseline hazard function and $Z$ is the frailty variable. 
\end{definition}
\begin{remark}\label{gridlambda}
	A motivation for assuming the GIG distribution for the frailty term is that, for a fixed $\lambda$, there are several known special cases of this distribution. The inverse-Gaussian (IG), reciprocal inverse-Gaussian (RIG), hyperbolic (HYP) and positive hyperbolic (PHYP) frailty models are obtained by choosing $\lambda=-1/2$, $\lambda=1/2$, $\lambda=0$ and $\lambda=1$, respectively. Hence, by changing the value of $\lambda$, one can fit different frailty distributions to the data. 
\end{remark}

\begin{remark}\label{specialcases}
	The parameterization considered for our frailty class is such that the IG case ($\lambda=-1/2$) satisfies $E(Z) = 1$ and $\mbox{Var}(Z) = \alpha$. This is the parameterization considered for IG frailty model in the literature; see \cite{wienke}.
\end{remark}

Through known results on frailty modeling, we can obtain the marginal survival function $S(\cdot)$ and density $f(\cdot)$ of $T$. In line with the Definition \ref{def_frailty}, we denote by $h_0(t)$ the baseline hazard function and $H_0(t) = \int_{0}^{t} h_0(u) \ du$ the cumulative baseline hazard function. The marginal survival function is 
\begin{eqnarray}\label{surv}
S(t) = L(H_{0}(t)) = \left( \frac{1}{1 + 2\alpha H_{0}(t)} \right)^{\lambda/2} \frac{K_{\lambda}(\sqrt{\alpha^{-1}\left[\alpha^{-1} + 2H_0(t)\right]})}{K_{\lambda}(\alpha^{-1})}, \quad \mbox{for} \quad t>0.
\end{eqnarray}

From (\ref{surv}) and by using the recurrence identity of the Bessel function $K_v(z) = \dfrac{z}{2v}[K_{v+1}(z) - K_{v-1}(z)]$, we obtain that the marginal density function assumes the form
\begin{eqnarray*}
	f(t) = \frac{h_0(t)}{{(1+2\alpha H_0(t))}^\frac{\lambda+1}{2}} \frac{K_{\lambda+1}(\sqrt{\alpha^{-1}[\alpha^{-1} + 2H_0(t)]})}{K_{\lambda}(\alpha^{-1})}, \quad \mbox{for} \quad t>0.
\end{eqnarray*}

Next, we discuss another way of characterizing the frailty distribution that was introduced by \cite{farrington}, which is the relative frailty variance (hereafter RFV). The RFV is a measure of how the heterogeneity of the population evolves over time. This function can be used as a way to compare patterns of dependence among different frailty models and is obtained through the Laplace transform of the frailty distribution.
Let $J(s) = \log L(-s)$, where $L(\cdot)$ is the Laplace transform and $\mu$ is the expected value of the frailty. Then, RFV$(s) = J''(-s/\mu)/J'(-s/\mu)^2$; see Eq. (3) from \cite{farrington}. The expressions required to calculate the RFV for the GIG frailty model are given in Appendix A.

\begin{figure}[!h]
	\centering
	\includegraphics[width=0.5\textwidth]{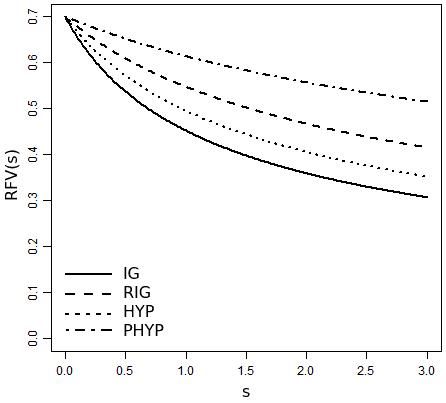}
	\caption{Relative frailty variance evolution for IG, RIG, HYP and PHYP frailties with $\alpha$ chosen such that RFV$(0)=0.7$ in each case.}
	\label{fig_rfv}
\end{figure}

In Figure \ref{fig_rfv}, we present the evolution of the RFV$(s)$ function for the IG, RIG, HYP and PHYP frailties. For the four distributions under comparison, we find for each of them the root of the equation $RFV(0)-0.7 = 0$. That is, the value of $\alpha$ for which RFV(0) is equal to $0.7$. In this way we can better compare how their evolutions differ, since they have the same starting point.

The results of Figure \ref{fig_rfv} evidence similarity within the GIG family regarding the RVF evolution. All special cases have shown decreasing RFV over time, which means that individuals who survive tend to be less heterogeneous (or less dependent) over time. What differentiates among the distributions under comparison is the speed in which this happens. This can be observed in the curvature of the trajectory $RFV(s)$ versus $s$. The most strong curvature is related to the inverse Gaussian model. As a consequence, in this frailty distribution the decrease in heterogeneity (or dependence) occurs more quickly. The PHYP distribution shows a slower decay. The HYP and RIG models have intermediate behaviors for the evolution of the relative frailty variance.

We end this section by calling attention to the fact that the proposed methodology is not restricted to the four options of $\lambda$ discussed in Remark \ref{gridlambda}. These special cases are considered in the simulation studies developed ahead and we believe that they provide enough material to understand the main aspects related to the GIG class. Another strategy indicated for practical applications, as we will consider for the neuroblastoma data analysis, is testing a grid of $\lambda$ values and choosing the one that maximizes the log-likelihood function. In other words, the strategy is to perform a profile likelihood approach for estimating $\lambda$.

\section{GIG frailty model for clustered survival data} \label{clust}

This section explains the GIG frailty model in two contexts: one including a regression structure and the framework extending the analysis to the multivariate case. In the univariate approach, each individual has its own frailty value. This version is applicable, for example, when important covariates are missing from the analysis or when individual heterogeneity is naturally present. On the other hand, in shared frailty models, individuals in a group share the frailty value, which creates a positive dependence among them. With that, we are able to model correlated or clustered survival data. Evidently, this can be reduced to the univariate approach when groups are formed by one observation each. As mentioned previously, although the data considered here is univariate, we introduce our class of frailty models in a more general setting and, therefore, they can be applied for broad situations.

Assume $m$ clusters with the $i$-th cluster having $n_i$ individuals, for $i=1,\cdots,m$. Here $T_{ij}^0$ and $C_{ij}$ denote the failure and censoring times for the individual $j = 1,\cdots,n_i$ in the $i$-th cluster. The total sample size is $n = \sum_{i=1}^{m} n_i$. In addition, let $T_{ij} = \min\{T_{ij}^0,C_{ij}\}$ be the observable response variables and $\delta_{ij}= I\{T_{ij}^0 \le C_{ij}\}$ the failure indicator. Naturally, the frailty $Z_i$ is associated to the $i$-th cluster. In order to complete the model specification, we make the assumptions that given $Z_i$, \{$(T_{ij}^0,C_{ij}); j=1,\cdots,n_i \}$ are conditionally independent and that $T_{ij}^0$ and $C_{ij}$ are independent for all $j$. Another assumption we rely on is that the censoring times within a cluster $\{C_{ij}; j=1,\cdots,n_i \}$ are non-informative with respect to $Z_i$ for all $i$.

Conditional on the frailty, the model has a structure similar to the proportional hazards model in \cite{cox72}. The frailty term $Z_i$ is inserted as follows $h(t_{ij}|Z_i) = Z_ih_0(t_{ij})\exp({x_{ij}^\top \beta})$, for $t_{ij}>0$,
where $x_{ij}$ is the vector of covariates of the $j$-th individual from the $i$-th cluster.

In the forthcoming discussion, we present maximum likelihood estimation for the parametric case, where a specific form for the baseline hazard function needs to be assumed. Further, we propose an EM-algorithm for estimating the model parameters with focus on the piecewise exponential baseline hazard function, which provides a model flexibility since we do not impose any specific shape for that function.

\subsection{Maximum likelihood estimation}

In this section, we present the unconditional survival and density functions to construct the likelihood function and perform maximum likelihood estimation under the parametric approach. The joint survival function of a cluster can be found through the Laplace transform of the frailty variable in (\ref{laplace}). The joint survival function associated to the $i$-th cluster is
\begin{eqnarray} \label{joint_surv}
S(t_{i1},\cdots,t_{i \, n_i}) & = & L_{Z_i}\left(\sum_{j=1}^{n_i} H_0(t_{ij})e^{x_{ij}^\top \beta} \right) \\
& = & {\left( \frac{1}{1+2\alpha\sum_{j=1}^{n_i} H_0(t_{ij})e^{x_{ij}^\top \beta} } \right)}^{\lambda/2} 
\;\;\; \frac{K_{\lambda}\left(\sqrt{\alpha^{-1}[ \alpha^{-1}+ 2\sum_{j=1}^{n_i}H_0(t_{ij})e^{x_{ij}^ \top\beta}] }\right) }{K_\lambda (\alpha^{-1})}, \nonumber
\end{eqnarray}
for $t_{ij} > 0 \;\; \forall i,j$. The joint density associated to this survival function is
\begin{eqnarray} \label{joint_dens}
f(t_{i1},\cdots,t_{i \, n_i}) = \frac{\alpha^{-\lambda}}{K_\lambda(\alpha^{-1})} (2/\alpha)^{n_i} \prod_{j=1}^{n_i} h_0(t_{ij}) \ e^{x_{ij}^\top \beta} \chi^{(n_i)}\left(\alpha^{-1}[\alpha^{-1} + 2H_0(t_{ij})e^{x_{ij}^\top \beta}] \right),
\end{eqnarray}
where $\chi^{(k)}(x) = \dfrac{\partial}{\partial x^k} \dfrac{(-1)^{k}}{x^{\phi/2}}K_\phi(\sqrt{x})$, for $k\in\mathbb N$.

Let $\theta = (\beta, H_0, \alpha)^\top$ be the parameter vector and $\ell(\theta)$ denotes the log-likelihood function. Here, $\lambda$ is assumed to be fixed. Using the expressions in (\ref{joint_surv}) and (\ref{joint_dens}) together with Lemma 1 from Appendix A, we can write the log-likelihood function as 
\begin{eqnarray*} \label{loglik_obs}
	\ell(\theta) =\log\left\{ \prod_{i=1}^{m} \frac{\alpha^{-(\sum_{j=1}^{n_i}\delta_{ij})+\lambda}}{K_\lambda(\alpha^{-1})} \; \Psi_{\lambda+\sum_{j=1}^{n_i}\delta_{ij}}\left(\alpha^{-1}[\alpha^{-1} + 2\sum_{j=1}^{n_i}H_{0}(t_{ij})e^{x_{ij}^\top \beta}]\right)\prod_{j=1}^{n_i} \left[h_0(t_{ij})e^{x_{ij}^\top \beta} \right]^{\delta_{ij}}\right\},
\end{eqnarray*}
where $\Psi_\lambda(x) = K_\lambda(\sqrt{x}) / x^{\lambda/2}$.
In order to fit a parametric GIG frailty model, we specify the baseline hazard function $h_0(\cdot)$ and obtain the parameter estimates by maximizing the log-likelihood function through numerical optimization methods. Some of them are available in \verb|R| through the function \verb|optim|; this includes the BFGS and Nelder-Mead methods \citep{fletcher}. In this paper, we assume a Weibull baseline hazard in the parametric approach; this leads to the following formulations $h_0(t) = \sigma \gamma t^{\gamma-1}$ and $H_0(t) = \sigma t^{\gamma}$, for $t>0$ and $\sigma,\gamma>0$.

\subsection{EM-algorithm}\label{semi}
We here propose an EM (Expectation-Maximization) algorithm \citep{dempster} for estimating the parameters. Further, we follow \cite{lawless} for the choice of $h_0(\cdot)$ to be a piecewise-constant hazard function (also known as piecewise exponential hazard function). These authors argue that the use of the piecewise-constant hazard function avoids many problems associated with nonparametric and semiparametric methods for incomplete survival data, but still provides a high degree of robustness. We will denote this approach by PE-GIG model, where PE stands for piecewise exponential baseline hazard function.

The baseline hazard function of a piecewise exponential distribution is given by
\begin{eqnarray} \label{pwe}
h_0(t) = \eta_l \quad \mbox{for} \quad t^{(l-1)} \leq t < t^{(l)} \quad \mbox{and} \quad l = 1,\cdots,k+1 \ ,
\end{eqnarray}
where $t^{(l)}$ denotes the $l$-th ordered time and $0 = t^{(0)} < \min\{t_i; i=1,\cdots,n\} = t^{(1)}  < \cdots < t^{(k)} \leq \max\{t_i; i=1,\cdots,n\} = t^{(k+1)}$ are the pre-specified cut points. This implies that the cumulative baseline function is $H_0(t|\eta)= \sum_{j=1}^{i} \eta_j [\min\{t, t^{(j+1)}\}-t^{(j)}]$ for $t^{(i)} \leq t < t^{(i+1)}$ and $i = 0,\cdots,k$. 

\cite{kim} introduced the estimation of the survival function through the piecewise exponential estimator; they considered as many partitions as the number of observed failures. However, we will follow the steps in \cite{lawless} that showed, through simulation studies, that frailty models based on the piecewise constant hazards often provides satisfactory results for estimating parameters when 8-10 pieces are admitted. 

The estimation of parameters will be done via an EM algorithm. This type of algorithm is appropriate to handle the presence of missing/latent data in the likelihood function and has been used in the literature of frailty models by \cite{klein}, \cite{klein2012}, \cite{callegaro} and more recently by \cite{barretosouza}, just to name a few. In the Expectation (E) step, we compute the conditional expectation of the complete log-likelihood function given the observed data, evaluated at the current parameter estimates; this is called $Q$-function. In the Maximization (M) step, we maximize the $Q$-function. The new estimates obtained in the M step are used to update the $Q$-function. We iterate between the these steps until a pre-specified convergence criterion is reached. 

The complete data set is denoted by $(t_{ij},\delta_{ij},Z_i)$, for $j=1,\cdots,n_i$ and $i=1,\cdots,m$. We observe the pairs $(t_{ij},\delta_{ij})$ and the $Z_i$'s are the latent random effects. The complete log-likelihood function can be written as $\ell_c(\theta) = \ell_1(\beta,H) + \ell_2(\alpha)$, where
\begin{eqnarray*}
	\ell_1(\beta,H)  \propto  \sum_{i=1}^{m} \sum_{j=i}^{n_i} \delta_{ij}\left(x_{ij}^\top \beta + \log h_0(t_{ij})  \right) - \sum_{i=1}^{m} \sum_{j=i}^{n_i} Z_i H_0(t_{ij})e^{x_{ij}^\top \beta}
\end{eqnarray*}
and
\begin{eqnarray*}
	\ell_2(\alpha) \propto -m  \log K_\lambda(\alpha^{-1}) - \frac{1}{2\alpha} \sum_{i=1}^{m} \left(z_i + z_i^{-1} \right).
\end{eqnarray*}

In order to determine the E-step, we need to find the conditional distribution of $Z_i$ given the observed data $\left\{(t_{ij}, \delta_{ij})\right\}_{j=1}^{n_i}$. By using basic properties of conditional densities, it can be shown that this conditional distribution is a GIG law. More specifically, we have that
\begin{eqnarray}\label{conddist}
Z_i|\left\{(t_{ij}, \delta_{ij})\right\}_{j=1}^{n_i}\sim\mbox{GIG}\left(\alpha^{-1} +2\sum_{j=1}^{n_i} H_0(t_{ij})e^{x_{ij}^\top \beta},\alpha^{-1},\lambda+\sum_{j=1}^{n_i} \delta_{ij}\right). 
\end{eqnarray} 

The $Q$-function is the conditional expected value of the complete log-likelihood function given the observed data at the current estimated parameter values. In other words, we can write 
$$
Q(\theta,\theta^{(r)}) \equiv E_{\theta^{(r)}}\bigg(\ell_c(\theta) \mid \{(t_{ij},\delta_{ij})\}, j=1,\cdots,n_i,\, i=1,\cdots,m \bigg),
$$ 
where $\theta^{(r)}$ denotes the estimated vector of parameters of the EM-algorithm in the $r$-th step. The result depends on the expectations $\omega_i(\theta)\equiv E\left(Z_i | \{(t_{ij},\delta_{ij})\}_{j=1}^{n_i}\right)$ and $\kappa_i(\theta)\equiv E\left(Z_i^{-1} | \{(t_{ij},\delta_{ij})\}_{j=1}^{n_i} \right)$. 
Details about these terms are presented in next proposition.

\begin{proposition} (E-step of the EM algorithm) For $i = 1,\cdots,m$, we have that
	\begin{eqnarray*}
		\omega_i(\theta) = \frac{K_{\lambda + \sum_{j=1}^{n_i} \delta_{ij} +1}\left(\sqrt{\alpha^{-1}[\alpha^{-1} + 2\sum_{j=1}^{n_i} H_0(t_{ij})e^{x_{ij}^\top \beta}]}\right)}{K_{\lambda + \sum_{j=1}^{n_i} \delta_{ij}}\left(\sqrt{\alpha^{-1}[\alpha^{-1} + 2\sum_{j=1}^{n_i} H_0(t_{ij})e^{x_{ij}^\top \beta}]}\right)} \left(1 + 2\alpha\sum_{j=1}^{n_i} H_0(t_{ij})e^{x_{ij}^\top \beta}\right)^{-1/2}
	\end{eqnarray*}
	and
	\begin{eqnarray*}
		\kappa_i(\theta) = \frac{K_{\lambda + \sum_{j=1}^{n_i} \delta_{ij} -1}\left(\sqrt{\alpha^{-1}[\alpha^{-1} + 2\sum_{j=1}^{n_i} H_0(t_{ij})e^{x_{ij}^\top \beta}]}\right)}{K_{\lambda + \sum_{j=1}^{n_i} \delta_{ij}}\left(\sqrt{\alpha^{-1}[\alpha^{-1} + 2\sum_{j=1}^{n_i} H_0(t_{ij})e^{x_{ij}^\top \beta}]}\right)} \left(1+2\alpha\sum_{j=1}^{n_i} H_0(t_{ij})e^{x_{ij}^\top \beta}\right)^{1/2}.
	\end{eqnarray*}
\end{proposition}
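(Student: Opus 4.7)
The plan is to combine two results already established earlier in the excerpt: the conditional distribution statement in equation (\ref{conddist}), which identifies $Z_i \mid \{(t_{ij},\delta_{ij})\}_{j=1}^{n_i}$ as a GIG law with explicit parameters, and the closed-form moment formula (\ref{moments}) for GIG random variables, which is valid for any real power $k$. Once these are in hand, obtaining $\omega_i(\theta)$ and $\kappa_i(\theta)$ is essentially a substitution exercise with no delicate analytic steps.

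In detail, I would first recall that by (\ref{conddist}),
\begin{eqnarray*}
Z_i \mid \{(t_{ij},\delta_{ij})\}_{j=1}^{n_i} \sim \mathrm{GIG}(\tilde a_i,\tilde b_i,\tilde \lambda_i),
\end{eqnarray*}
with $\tilde a_i = \alpha^{-1}+2\sum_{j=1}^{n_i} H_0(t_{ij})e^{x_{ij}^\top \beta}$, $\tilde b_i=\alpha^{-1}$ and $\tilde \lambda_i=\lambda+\sum_{j=1}^{n_i}\delta_{ij}$. Then I would apply formula (\ref{moments}) to this conditional law with $k=1$ for $\omega_i(\theta)$ and $k=-1$ for $\kappa_i(\theta)$, obtaining
\begin{eqnarray*}
\omega_i(\theta) = \frac{K_{\tilde\lambda_i+1}(\sqrt{\tilde a_i\tilde b_i})}{K_{\tilde\lambda_i}(\sqrt{\tilde a_i\tilde b_i})}\left(\tilde b_i/\tilde a_i\right)^{1/2}, \qquad \kappa_i(\theta) = \frac{K_{\tilde\lambda_i-1}(\sqrt{\tilde a_i\tilde b_i})}{K_{\tilde\lambda_i}(\sqrt{\tilde a_i\tilde b_i})}\left(\tilde b_i/\tilde a_i\right)^{-1/2}.
\end{eqnarray*}

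To finish, I would simplify the scaling factor. Since $\tilde b_i/\tilde a_i = \alpha^{-1}/(\alpha^{-1}+2\sum_j H_0(t_{ij})e^{x_{ij}^\top\beta}) = (1+2\alpha\sum_j H_0(t_{ij})e^{x_{ij}^\top\beta})^{-1}$, the exponents $\pm 1/2$ produce exactly the factors $(1+2\alpha\sum_j H_0(t_{ij})e^{x_{ij}^\top\beta})^{\mp 1/2}$ appearing in the statement, and $\sqrt{\tilde a_i\tilde b_i} = \sqrt{\alpha^{-1}[\alpha^{-1}+2\sum_j H_0(t_{ij})e^{x_{ij}^\top\beta}]}$ matches the Bessel arguments. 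Substituting $\tilde\lambda_i$ yields the two claimed formulas.

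There is really no serious obstacle here beyond being careful with the bookkeeping of the GIG parameters and verifying that formula (\ref{moments}) applies at $k=-1$ (which it does, since the GIG moment formula holds for all real $k$ when $a,b>0$, as the integral defining $E(X^k)$ converges on both tails). If the paper wanted to be self-contained, one could alternatively derive (\ref{conddist}) first by writing the joint density of $(Z_i,\{t_{ij},\delta_{ij}\})$ as a product of the conditional survival/density given $Z_i$ and the GIG$(\alpha^{-1},\alpha^{-1},\lambda)$ prior, and recognizing the resulting kernel in $z_i$ as a GIG density; however, since (\ref{conddist}) is already stated in the excerpt, I would just quote it and move on.
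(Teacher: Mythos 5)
Your proof is correct and follows exactly the same route as the paper, which simply notes that the result "follows immediately by using (\ref{moments}) and (\ref{conddist})"; you have merely written out the substitution $k=\pm 1$ and the simplification of $\tilde b_i/\tilde a_i$ that the authors leave implicit. No gaps.
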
	
\begin{proof} It follows immediately by using (\ref{moments}) and (\ref{conddist}).
\end{proof}

Using the previous expressions and assuming a piecewise exponential hazard for $h_0(t)$ as stated in (\ref{pwe}), provided a fixed set of cut points, we have fully defined our $Q$-function. The EM-algorithm establishes simpler expressions to be maximized rather than running a direct maximization procedure based on the observed likelihood function in (\ref{loglik_obs}) with a piecewise constant hazard function. In addition, applying the EM approach allows us to do the optimization separately for $(\beta,\eta)$ and $\alpha$. A complete description of the EM-algorithm for the GIG frailty models is given in Algorithm \ref{alg} including initial guesses for initialization of the procedure.

\begin{algorithm}
	\caption{EM-algorithm for the class of PE-GIG frailty models}\label{alg}
	\begin{algorithmic}[1]
		\small
		\STATE Provide initial guesses for the parameters. Denoting $\theta^{(r)}$ as the estimate of the set of parameters at step $r$, $\beta^{(0)}$ will be the coefficients obtained by fitting the proportional hazards model through the \verb|R| package \verb|survival|. Set $\alpha^{(0)} = 1$ and $\eta_k^{(0)} = \sum_{i=1}^{m} \sum_{j=1}^{n_i} \delta_{ij} / \sum_{i=1}^{m} \sum_{j=1}^{n_i} t_{ij}$, if $t_{ij}$ is a failure time for $t^{(l)} \leq t_{ij} < t^{(l+1)}$. The term $t^{(l)}$ represents the $l$-th change point, for $l=1,\cdots,k$. The expression for $\eta_k^{(0)}$ is the maximum likelihood estimator of the rate of an exponential distribution in each given interval using only the failure times within that interval. \newline
		
		\STATE Update the $Q$-function using the expressions found for $\omega_i(\theta)$ and $\kappa_i(\theta)$ with $\theta= \widehat{\theta }^{(r)}$. \newline
		
		\STATE Use the expressions of $h_0(t)$ and $H_0(t)$, based on the piecewise exponential distribution, and numerically obtain the maximum likelihood estimates of $\beta$ and $\eta$ by maximizing $Q_1$. For this task, we chose to work with the \verb|optim| function in \verb|R| under the BFGS maximization routine \citep{fletcher}. The new estimates of the parameters are denoted by ${\widehat{\beta}}^{(r+1)}$ and ${\widehat{\eta}}^{(r+1)}$.  \newline
		
		\STATE Find the maximum likelihood estimate for $\alpha$ by maximizing $Q_2$ numerically. Again, we use \verb|optim| with BFGS. The new estimate is denoted by ${\widehat{\alpha}}^{(r+1)}$. \newline
		
		\STATE Verify a convergence criterion. For example, $||{{\theta}^{(r+1)}-{\theta}^{(r)}||} < \epsilon$ for some pre-established $\epsilon > 0$. If the convergence criterion is satisfied, set ${\theta}^{(r+1)}$ as the final parameter estimates. Otherwise, update ${\theta}^{(r)}$ with ${\theta}^{(r+1)}$ and return to Step 2.
	\end{algorithmic}
\end{algorithm}

In order to obtain the standard errors of the parameter estimates, we choose to work with bootstrap resampling \citep{efron}. The main reason for this strategy is the possibility to avoid cumbersome numerical derivatives that would be necessary to perform the procedure in \cite{klein}, which is based on the observed Fisher Information matrix.


\section{Simulation studies}\label{sec:simulation}

This section shows simulation studies where the main aim it to evaluate the performance, in terms of estimation, of the PE-GIG frailty models under misspecification of the frailty distribution. We study scenarios in which the data are generated with gamma, inverse Gaussian, generalized exponential and log-normal frailties. We fit to each synthetic data set the IG, RIG, HYP and PHYP frailty models with different numbers of change points for the piecewise constant hazard function and run 1000 Monte Carlo replicas in each scenario. We fit the semiparametric versions of the gamma and GE frailty models, described in the literature, to compare with the models proposed in the present paper. 

For the data generated with gamma, generalized exponential and log-normal frailties, we consider the sample sizes of $m = 200$ and $m = 500$ with clusters formed by $n_i = 2$ individuals each, resulting in total sample sizes of $400$ and $1000$. In addition, we explore $\lambda$ misspecification by generating data with inverse-Gaussian frailty and fitting the other special cases. In this context, we consider $m = 20$ and $m = 100$ with $n_i = 10$ for all $i$. This way we maintain the total sample size and are able to test the model under data with larger clusters. In this scenario, we do not report the results of the GE model because the explicit expressions of this model are available for clusters up to size 2.

Given the frailties, the failure times $T_ {ij}^0 $ were generated from a Weibull distribution with $\gamma = 2$ and $\sigma = 0.25$. The censoring times $C_{ij}$ were generated, independently of $T_ {ij}^0 $, following a Weibull distribution with parameters $\gamma = 2$ and $\sigma = 0.05$. The observed data are $T_{ij}=\min\{T_ {ij}^0,C_{ij}\}$ and $\delta_ {ij} = I\{T_{ij}^0 \leq C_{ij} \}$. With this configuration, we have a percentage of approximately 30\% of censored observations. In all scenarios, we consider two covariates that were generated from the Bernoulli$(0.5)$ and Uniform$(-1,1)$ distributions with the true values of their effects being $(\beta_1, \beta_2) = (1.5,-1.0)$. The true value of $\alpha$ was set to be $1$. Fitting the IG, RIG, HYP and PHYP models is done considering $5$ and $10$ change points for the piecewise baseline hazard function. The semiparametric versions of the gamma and GE models evaluated here are based on the Cox partial likelihood function, therefore, they do not require a cut point specification.

The parameter $\alpha$ represents the frailty variance only in the gamma and IG options. Hence, in each case an appropriate transformation of this parameter is calculated so that we obtain the frailty variance. This comparison is done as discussed by \cite{barretosouza}, where it is noted that the model given by $h(t_{ij}|Z_i) = Z_i \ h_0(t_{ij}) \exp\{x_{ij}^\top \beta\}$ is equivalent to $h(t_{ij}|Z_i) = Z_i^{*} \ h_0^{*}(t_{ij}) \exp\{x_{ij}^\top \beta\}$, with $Z_i^* = Z_i/E(Z_i)$ having mean $1$ and $ h_0^{*}(t_{ij}) = h_0(t_{ij})E(Z_i)$. In other words, the comparison of the frailty variance should be done through the transformation $\mbox{Var}(Z_i^*) = \mbox{Var}(Z_i)/E(Z_i)^2$. The proper transformation for each model is done so that they are comparable and is reported in the column named ``Var'' in the forthcoming tables.

In Table \ref{table_GA_200} we present the empirical mean and root mean square error (RMSE) of the estimates of the parameters obtained through 1000 Monte Carlo replicas. Here the data were generated with gamma frailty and sample size $m = 200$, with $n_i = 2$ for all $i$. We fit to this data the gamma, GE and GIG frailty models. 

\begin{table}[]
	\centering
	\caption{Empirical mean and root mean square error (RMSE) of the estimates for $\beta_1$, $\beta_2$ and for the frailty variance of gamma distributed frailty data with sample size $m=200$. Rows represent the fitted model. The true values of the parameters are $\beta_1 = 1.5$, $\beta_2 = -1$ and $\alpha = 1$ (true frailty variance is $1$).} \label{table_GA_200}
	\begin{tabular}{@{}lccccccc@{}}
		\hline
		\multicolumn{2}{c}{\textbf{}}                                    & \multicolumn{2}{c}{\textbf{$\beta_1$}} & \multicolumn{2}{c}{\textbf{$\beta_2$}} & \multicolumn{2}{c}{\textbf{Var}} \\
		\multicolumn{1}{c}{\textbf{Model}}         & \textbf{Cut points} & \textbf{Mean}    & \textbf{RMSE}   & \textbf{Mean}    & \textbf{RMSE}   & \textbf{Mean}   & \textbf{RMSE}  \\ \hline
		Semi. Gamma                       & -                   & 1.288            & 0.212           & $-$0.857           & 0.143           & 0.592           & 0.408          \\
		\\
		Semi. GE                          & -                   & 1.396            & 0.104           & $-$0.927           & 0.073           & 0.767           & 0.233          \\
		\\
		\multirow{2}{*}{PE - IG}   & $k=5$                 & 1.307            & 0.193           & $-$0.862           & 0.138           & 1.541           & 0.541          \\
		& $k=10$                & 1.361            & 0.139           & $-$0.900           & 0.100           & 1.866           & 0.866          \\
		\\
		\multirow{2}{*}{PE - RIG}  & $k=5$                & 1.408            & 0.092           & $-$0.927           & 0.073           & 1.153           & 0.153          \\
		& $k=10$                & 1.482            & 0.018           & $-$0.981           & 0.019           & 1.272           & 0.272          \\
		\\
		\multirow{2}{*}{PE - HYP}  & $k=5$                & 1.365            & 0.135           & $-$0.898           & 0.102           & 1.388           & 0.388          \\
		& $k=10$                & 1.434            & 0.066           & $-$0.946           & 0.054           & 1.608           & 0.608          \\
		\\
		\multirow{2}{*}{PE - PHYP} & $k=5$                 & 1.383            & 0.117           & $-$0.914           & 0.086           & 0.865           & 0.135          \\
		& $k=10$                & 1.427            & 0.073           & $-$0.949           & 0.051           & 0.894           & 0.106          \\ \hline
	\end{tabular}
\end{table}

As can be seen, the correctly specified model clearly underestimates the frailty variance. This is something that was also observed by \cite{barretosouza} in their simulations. In that paper it is pointed that the difficulty in estimating $\alpha$ in the gamma frailty model can be due to the flat shape of its associated $Q$-function. We note here a clear advantage of the PHYP, RIG and GE models in estimating this quantity, as they returned smaller RMSEs than the true model. Although the gamma case is the true model in this scenario, it showed the greatest bias in the estimation of covariate effects. The lowest RMSEs in the estimation of $\beta_1$ and $\beta_2$ are due to GIG frailties. When $k = 10$, the RIG, HYP and PHYP returned excellent average estimation of these coefficients, showing some advantage over the GE model as well.

\begin{table}[]
	\centering
	\caption{Empirical mean and root mean square error (RMSE) of the estimates of $\beta_1$, $\beta_2$ and for the frailty variance of gamma distributed frailty data with sample size $m=500$. Rows represent the fitted model. The real values of the parameters are $\beta_1=1.5$, $\beta_2=-1$ and $\alpha=1$ (true frailty variance is 1).} \label{table_GA_500}
	\begin{tabular}{lccccccc}
		\hline
		\multicolumn{2}{c}{\textbf{}}                                    & \multicolumn{2}{c}{\textbf{$\beta_1$}} & \multicolumn{2}{c}{\textbf{$\beta_2$}} & \multicolumn{2}{c}{\textbf{Var}} \\
		\multicolumn{1}{c}{\textbf{Model}}         & \textbf{Cut points} & \textbf{Mean}    & \textbf{RMSE}   & \textbf{Mean}    & \textbf{RMSE}   & \textbf{Mean}   & \textbf{RMSE}  \\ \hline
		Semi. Gamma                       & -                   & 1.290            & 0.210           & $-0.863$           & 0.137           & 0.609           & 0.391          \\
		\\
		Semi. GE                          & -                   & 1.393            & 0.107           & $-0.929$           & 0.071           & 0.769           & 0.231          \\
		\\
		\multirow{2}{*}{PE - IG}   & $k=5$                & 1.294            & 0.206           & $-0.860$           & 0.140           & 1.480           & 0.480          \\
		& $k=10$                & 1.345            & 0.155           & $-0.897$           & 0.103           & 1.782           & 0.782          \\
		\\
		\multirow{2}{*}{PE - RIG}  & $k=5$                 & 1.395            & 0.105           & $-0.923$           & 0.077           & 1.149           & 0.149          \\
		& $k=10$                & 1.471            & 0.029           & $-0.978$           & 0.022           & 1.275           & 0.275          \\
		\\
		\multirow{2}{*}{PE - HYP}  & $k=5$                 & 1.350            & 0.150           & $-0.894$           & 0.106           & 1.351           & 0.351          \\
		& $k=10$                & 1.418            & 0.082           & $-0.942$           & 0.058           & 1.570           & 0.570          \\
		\\
		\multirow{2}{*}{PE - PHYP} & $k=5$                 & 1.381            & 0.119           & $-0.915$           & 0.085           & 0.882           & 0.118          \\
		& $k=10$                & 1.425            & 0.075           & $-0.951$           & 0.049           & 0.909           & 0.091          \\ \hline
	\end{tabular}
\end{table}

Table \ref{table_GA_500} shows that, in general, increasing the sample size does not lead to a significant decrease of the RMSE in any of the models. Note that the correctly specified frailty model continued to provide the large biases in the estimation of fixed effects and high underestimation of the frailty variance. Regarding the frailty parameter, the model that returned the best results was the PHYP frailty model. Still, the special cases IG, HYP and RIG also displayed excellent estimates of covariate effects, especially when $k = 10$.

In next scenario, data is generated with inverse-Gaussian frailty, one particular case of our proposed class. We have two main goals here: ($i$) evaluate the other particular cases under misspecification of $\lambda $ and ($ii$) explore how the proposed model behaves under larger cluster sizes. For this task, we consider the total sample sizes of 400 and 1000 as before, but $n_i = 10$ for all $i$.

Table \ref{table_IG_200} contains the mean and RMSE of the estimates obtained in 1000 Monte Carlo replicas of this simulation study. Note that, as expected, the true model estimates well the fixed effects and the frailty variance. In this scenario the gamma frailty model displays good estimates of the fixed effects but high RMSE for the frailty variance. In the comparison involving the particular cases of the GIG frailty, one can see that the choice of $\lambda$ does not affect the estimation of the covariate effects, but only the frailty variance. In terms of this quantity, besides the true model, the HYP case is the one presenting the smallest RMSE in the group under comparison. This behavior is also anticipated since the $\lambda$ value corresponding to this special case is zero, being the closest to the true $\lambda$ ($-0.5$; inverse-Gaussian case). We observe that the further the estimate of $\lambda$ is separated from its true value, the greater is the underestimation of the frailty variance. Even so, all GIG frailties estimated this quantity with smaller RMSEs than the gamma model.

\begin{table}[h!]
	\centering
	\caption{Empirical mean and root mean square error (RMSE) of the estimates for $\beta_1$, $\beta_2$ and for the frailty variance of IG distributed frailty data with total sample size equal to 400 ($m=20$ with $n_i = 10 \mbox{ for all } i$). Rows represent the fitted model. The true values of the parameters are $\beta_1 = 1.5$, $\beta_2 = -1$ and $\alpha = 1$ (true frailty variance is $1$).} \label{table_IG_200}
	\begin{tabular}{@{}lccccccc@{}}
		\hline
		\multicolumn{2}{c}{\textbf{}}                                    & 	\multicolumn{2}{c}{\textbf{$\beta_1$}} & \multicolumn{2}{c}{\textbf{$\beta_2$}} & \multicolumn{2}{c}{\textbf{Var}} \\
		\multicolumn{1}{c}{\textbf{Model}}         & \textbf{Cut points} & \textbf{Mean}    & \textbf{RMSE}   & \textbf{Mean}    & \textbf{RMSE}   & \textbf{Mean}   & \textbf{RMSE}  \\ \hline
		Semi. Gamma                       & -                   &     1.471        &    0.029      &  $ -0.980  $    &    0.020      &   0.552      &    0.448      \\
		\\
		\multirow{2}{*}{PE - IG}   & $k=5$                 &  1.450           &   0.050         &   $ -0.962$     & 0.038         &       0.878   &   0.122        \\
		& $k=10$           & 1.487     &     0.014   &  $ -0.992 $      &   0.022      &    0.978       &    0.022     \\
		\\
		\multirow{2}{*}{PE - RIG}  & $k=5$                 &       1.449      &    0.051     &    $ -0.961 $     &    0.039      &      0.657     &      0.343    \\
		& $k=10$                &    1.484       &    0.016        &     $-0.990 $     &    0.010     &     0.702      &   0.304     \\
		\\
		\multirow{2}{*}{PE - HYP}  & $k=5$                 &     1.452        &    0.048    & $ -0.962 $      &   0.038       & 0.755         &    0.245     \\
		& $k=10$                &   1.488        &    0.013      &   $ -0.992 $      &     0.008       &    0.822     &    0.182      \\
		\\
		\multirow{2}{*}{PE - PHYP} & $k=5$                 &       1.444      &   0.056         &  $-0.958$        &  0.042          & 0.573         &   0.427       \\
		& $k=10$           &  1.478    &   0.023   &   $ -0.986$       &   0.015     &     0.603     &       0.406    \\ \hline
	\end{tabular}
\end{table}

Table \ref{table_IG_500} contains the results of a similar simulation study assuming the total sample size being $1000$. We observe a similar behavior in relation to what is shown in Table \ref{table_IG_200}, where all models estimate well the fixed effects and the best estimation of the frailty variance is due to the correctly specified model. Likewise in the gamma scenario, increasing sample size did not produce a significant reduction in RMSE. 

\begin{table}[h!]
	\centering
	\caption{Empirical mean and root mean square error (RMSE) of the estimates of $\beta_1$, $\beta_2$ and of the frailty variance for IG distributed frailty data with total sample size equal to 1000 ($m=100$ with $n_i = 10 \mbox{ for all } i$). Rows represent the fitted model. The real values of the parameters are $\beta_1=1.5$, $\beta_2=-1$ and $\alpha=1$ so real frailty variance is also 1.} \label{table_IG_500}
	\begin{tabular}{@{}lccccccc@{}}
		\hline
		\multicolumn{2}{c}{\textbf{}}                                    & 	\multicolumn{2}{c}{\textbf{$\beta_1$}} & \multicolumn{2}{c}{\textbf{$\beta_2$}} & \multicolumn{2}{c}{\textbf{Var}} \\
		\multicolumn{1}{c}{\textbf{Model}}         & \textbf{Cut points} & \textbf{Mean}    & \textbf{RMSE}   & \textbf{Mean}    & \textbf{RMSE}   & \textbf{Mean}   & \textbf{RMSE}  \\ \hline
		Semi. Gamma                       & -                   &   1.466          &    0.034      &  $ -0.980  $   &  0.020      &   0.557     &   0.443       \\
		\\
		\multirow{2}{*}{PE - IG}   & $k=5$                 &    1.438         &   0.062       &    $ -0.958 $  &   0.042    &    0.867     &    0.133      \\
		& $k=10$           &  1.480    & 0.020       &   $-0.987$       &   0.013     &     0.957       &   0.043      \\
		\\
		\multirow{2}{*}{PE - RIG}  & $k=5$                 &    1.436       &    0.064     &    $-0.956 $    &    0.044    &    0.658       &     0.342    \\
		& $k=10$                &  1.477         &   0.023         &    $-0.985$       &     0.015    &    0.702       &   0.298     \\
		\\
		\multirow{2}{*}{PE - HYP}  & $k=5$                 &  1.439         &   0.061    &   $ -0.958 $   &    0.042    &  0.753        &   0.247     \\
		& $k=10$                &     1.480      &    0.020      &   $-0.987$        &   0.013         &   0.815      &    0.185      \\
		\\
		\multirow{2}{*}{PE - PHYP} & $k=5$                 &    1.432         &   0.068     &    $-0.953 $     &   0.047      &   0.579       &      0.421    \\
		& $k=10$           &   1.471   &   0.029   &    $-0.981$       &    0.019    &    0.608      &    0.392       \\ \hline
	\end{tabular}
\end{table}

Simulation studies were also performed generating data with GE and log-Normal frailty distributions. The results are reported in Appendix B. In the GE scenario, the gamma frailty model showed the highest RMSEs when estimating the two covariate effects. All special cases of the GIG model returned smaller bias in the estimation of $\beta_1$ and $\beta_2$ than those from the correctly specified model when $k = 10$. The best estimate of the frailty variance was due to the GIG special case PHYP. In the log-Normal scenario, we concluded that the gamma and GE frailty models underestimated the frailty variance when the true frailty distribution is log-normal, which was also observed in \cite{barretosouza}. Here the poorest performance is due to the gamma model. Not only it displays the highest bias in the estimation of the frailty variance, but $\beta_1$ and $\beta_2$ also show RMSEs that are considerably higher than those for the other models. All GIG special cases and the GE frailty model yield very good estimates of the covariate effects, but this does not occur for the frailty variance. The model that best estimated this last quantity was the inverse-Gaussian frailty model; it has displayed significantly lower RMSE than all competing cases.

\section{Real data application}\label{sec:analysis}

To illustrate the proposed methodology in this paper, we explore clinical data of children diagnosed with neuroblastoma cancer collected by the Therapeutically Applicable Research to Generate Effective Treatments (TARGET) initiative. Neuroblastoma is a type of cancer that originates in primitive forms of the nerve cells of the sympathetic nervous system. We aim to evaluate the effect of MYCN gene amplification and the presence of hyperdiploid chromosomes in the DNA content of the tumor cell on the survival time of patients. Although the amplification of the MYCN gene is commonly associated with a worse prognosis, the role of hyperdiploid chromosomes in the evolution of tumors still remains an ongoing question in cancer research. The prognosis of the presence of an abnormal number of chromosomes is not well understood as different conclusions have been drawn with respect to different types of cancer. 

For instance, \cite{dastague} associated higher ploidy to a better the prognosis for childhood B-acute lymphoblastic leukemia. In the same vein, \cite{carroll} added that hyperdiploidy seems to favor a beneficial impact in leukemogenesis, possible being a consequence rather than a driver of malignancy. However, \cite{donovan} recognized glioblastoma hyperdiploid tumor cells as a potential contributor to tumor evolution and disease recurrence in adult brain cancer patients. Following this argument, one of our aims in this data application is to reliably assess the effect of hyperdiploid chromosomes on the prognosis of children with neuroblastoma cancer by the usage of  appropriate statistical methodology. 

\subsection{Neuroblastoma data description and preliminary analysis}\label{sec:dataset}

The Therapeutically Applicable Research to Generate Effective Treatments (TARGET) program is intended to study the molecular changes causing childhood cancers. The main goal of this initiative is to collect data allowing researchers to work towards the development of effective and less toxic treatments. The program makes avaliable online data files of multiple cancer types that can be found in the TARGET data matrix section of their website \textcolor{blue}{\url{https://ocg.cancer.gov/programs/target/data-matrix}}; it is necessary to request permission from the National Cancer Institute to use this data, as we did. In this paper, we explore the clinical data of the neuroblastoma cancer containing patient survival data and multiple covariates. As described in their related homepage, the neuroblastoma cancer arises in immature nerve cells of the sympathetic nervous system, primarily affecting infants and children. It accounts for 12\% of childhood cancer mortality, those between 18 months and 5 years of age being the most affected. 

We work with the subset of patients classified as high risk by the Children's Oncology Group (COG), categorization that determines the type of treatment to be received. Of the total of 533 high risk patients with a known vital status, we selected 315 observations that had full information recorded so to test the influence of multiple variables. Among these 315 individuals, 178 of them deceased until the end of the study and the others were right censored. In a preliminary analysis, we assessed the significance of several covariates in the survival time of the patients. Some of them are: The International Neuroblastoma Staging System (INSS stage) that is a clinically and surgically based staging system used to categorize tumor extent; the International Neuroblastoma Pathology Classification Mitosis Karyorrhexis Index Category (MKI); the patient's age at diagnosis; the patient's gender; the International Neuroblastoma Pathology Classification Tumor Cell Differentiation Degree Category, among others. At this stage, we selected covariates using the Cox model and the log-rank test. From this, we kept the variables MYCN gene amplification (categorized and Amplified or Not Amplified) and Ploidy, that is DNA Ploidy Analysis by Flow Cytometry Result Category, described as Diploid (DI=1) or Hyperdiploid (DI>1), with no interaction among them. 

An important and well-known tool to explore time to event data is the Kaplan-Meier estimate of the survival function, introduced in the seminal paper by \cite{km}. In Figure \ref{km_target}, we report this graph for the two covariates selected. We observe that the distance between the curves does not remain proportional over time, indicating that the Cox model is not appropriate for this data application.

\begin{figure}[ht!]
	\centering
	\includegraphics[width=.8\textwidth]{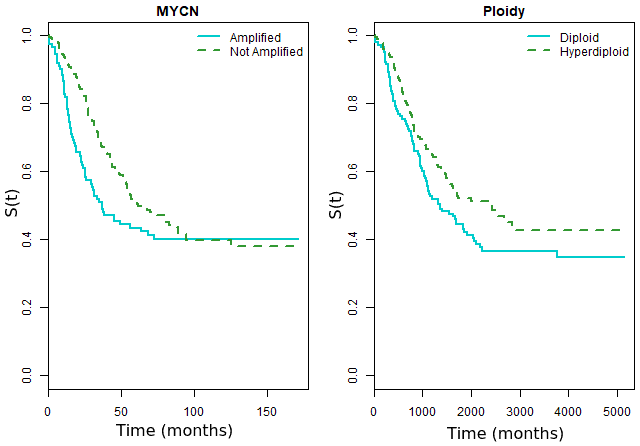} 
	\caption{Descriptive analysis of the TARGET neuroblastoma data set. Kaplan-Meier estimates of the survival function for the variables MYCN and Ploidy.}\label{km_target}
\end{figure}

Our modeling strategy is to apply frailty models where the proportionality of hazards is not suitable. At the same time, such approach allows to account for individual heterogeneity not explained by the covariates. For instance, it is reasonable to expect that biological distinctions (e.g. genetic features) exist among the patients.

In addition to the proposed GIG class, we employ well-established frailty models with implementations available in \texttt{R} program, as would most commonly be done in practice. We seek among the competitive models a robust estimation of the covariate effects and frailty variance. A reliable estimation of this quantities is crucial for our data application. The hazard ratios of MYCN and Ploidy will tell us about the influence of these variables in the patients' survival times, facilitating a prognosis assessment and contributing to the discussion about the role of hyperdiploid chromossomes in cancer cells. In addition, the frailty variance is also of interest as it informs how much heterogeneity to expect among the survival times of high risk neuroblastoma patients.


\subsection{Statistical analysis of neuroblastoma data}

In this section we analyze the TARGET neuroblastoma data using the proposed class of GIG frailties and also well-established frailty models. Our goal is to compare how our models perform in comparison to the frailty models commonly used in practice. Our response variable is the time (in months) from diagnosis to the last follow-up or death of the patient. We work with the subset of patients classified as high risk according to a system developed by the Children's Oncology Group (COG). This data set contains 315 observations, where 178 events are recorded and 137 are censored. We investigate the effects of two covariates on the survival time of the patients. They were selected after the preliminary analysis described in Section \ref{sec:dataset}. The covariates are: ($i$) MYCN Status, it corresponds to MYCN gene amplification status and is categorized as amplified or non-amplified; and ($ii$) Ploidy (DNA Ploidy Analysis by Flow Cytometry Result Category), categorizes the DNA content of the tumor cell as diploid or hyperdiploid.

We compare the estimation of the hazard ratios and frailty variance among the four special cases of the GIG class as well as the gamma, Positive Frailty Variance (PVF) and log-Normal frailties. By using univariate frailty models, we account for non-observed individual risk factors. We focus on semiparametric/quasi-semiparametric implementations of aforementioned models since no information about the parametric form of the baseline risk function is avaliable. Besides, testing the adequacy of parametric forms of this function is not an easy task. We fit the semiparametric implementations of the gamma, PVF and log-normal frailty models available in the \texttt{R} packages \texttt{frailtyEM} \citep{frailtyEM} and \texttt{frailtySurv} \citep{frailtySurv}. 

According to \cite{barretosouza}, the popular Gamma frailty model can present problems in the estimation of the frailty variance depending on the form of the likelihood to be maximized. Trying to circumvent this, two different implementations of this model are investigated here. We fit the versions in \texttt{frailtyEM} and \texttt{frailtySurv} packages. In the first, a non-penalized log-likelihood function is maximized while in the former a pseudo full likelihood approach is employed.

We choose the number of cut points (named $k$) for the GIG class by fitting each special case for a range of $k$ from 3 to 30. We notice that for the IG, HYP, RIG and PHYP frailties, the smallest AIC values were among $k$ from 10 to 15, with small variation in this range. Hence, we choose $k =10$ and present the results with this configuration. 

Table \ref{target_semipar} contains the results of the fitting the aforementioned frailty models to the TARGET neuroblastoma data. In addition to the parameter estimates, we report the exponential of $\beta_{mycn}$ and $\beta_{ploidy}$, i.e. the hazard ratio, which gives us the practical interpretation of the regression coefficients. See \cite{wienke} for a proper interpretation of these hazard ratios for frailty models.

\begin{table}[]
	\caption{
		Estimates of the parameters and standard errors (in parentheses) for the neuroblastoma data under the semiparametric/quasi-semiparametric approach.}\label{target_semipar}
	\begin{tabular}{@{}lllllll@{}}
		\hline
		Model                     & $\beta_{mycn}$ & $\beta_{ploidy}$ & {Var} & $\alpha$  & $e^{\beta_{mycn}}$ & $e^{\beta_{ploidy}}$  \\ \hline
		
		Gamma (\texttt{frailtyEM}) &  1.312 (0.366) &  0.586 (0.341)  &  3.866   &  0.259 (0.100)   & 3.714  &   1.797 \\
		
		Gamma (\texttt{frailtySurv})   &  1.203 (0.907)   &   0.568 (0.224)  &   3.117   &   3.117 (20.996) & 3.329  &  1.765 \\
		
		PVF (\texttt{frailtyEM})   & 1.429 (0.317)    &  0.275 (0.281)  &   2.021   &  0.495 (0.303) &  4.177 & 1.317  \\
		
		LN (\texttt{frailtySurv})   & 0.561 (0.417)    & 0.421 (0.213)   &  40.845   & 1.933 (16.307)  & 1.753  & 1.524 \\
		
		PE-IG ($k=10$)   &   0.463 (0.278)    & 0.426 (0.250)  &    5.939     & 5.939 (0.919) & 1.588 & 1.530  \\
		
		PE-RIG ($k=10$)  & 0.430 (0.272)   & 0.347 (0.241)  & 0.948       & 1.467 (0.093) & 1.537 & 1.415 \\
		
		PE-HYP ($k=10$)  & 0.542 (0.309)   & 0.491 (0.292)  &    3.550    & 7.817 (0.732) & 1.721 & 1.634 \\

		PE-PHYP ($k=10$) & 0.361 (0.249)   & 0.327 (0.212)  &      0.509     & 0.702 (0.029) & 1.435 & 1.388  \\
		\hline
	\end{tabular}
\end{table}

It is immediate that the parameter estimates vary considerably among the fits of the different models. Evidently, there is great variability in the estimation of the frailty variance, but the regression coefficients are also not immune to that. Furthermore, frailty variance related parameter, indicated in the column $\alpha$, displays huge standard errors under the gamma (\texttt{frailtySurv}) and log-normal models. We highlight that in the \texttt{frailtySurv} package standard errors are computed through Bootstrap replication, as are the ones in the GIG class. To provide a fair comparison, the same number of replicas (1000) was used to produce the results in both cases.

Having found a significant variability of results among different frailty models and its implementations, we seek to investigate the robustness of the estimates produced in each case. We propose to examine this through a bootstrap study where, in each replication, we sample 200 out of the 315 observations and fit the models under comparison. After a number of replications, we can assess the variation of the parameter estimates. More specifically, we examine how these vary around the maximum likelihood estimates gathered from the fit to the entire data set. Naturally, the less variability, more consistent the model is and more reliable are the estimates obtained from the complete data. This study is also motivated by the fact that model selection tools are scarce in the frailty models literature. It is specially hard in this case where the competing models employ different forms of the likelihood and distinct forms of the baseline risk functions.

The study is performed using 1000 bootstrap replicas. We report the results in Figures \ref{boot_covs} and \ref{boot_fvar} where each box corresponds to one of the models and the colors to the frailty distribution. Figure \ref{boot_covs} contains the boxplots of the hazard ratio estimates for the covariates Amplified MYCN and Diploid Ploidy. We center those around the values obtained from fit to the complete data set, that we refer as the "true" value.

\begin{figure}[ht!]
	\centering
	\includegraphics[width=1\textwidth]{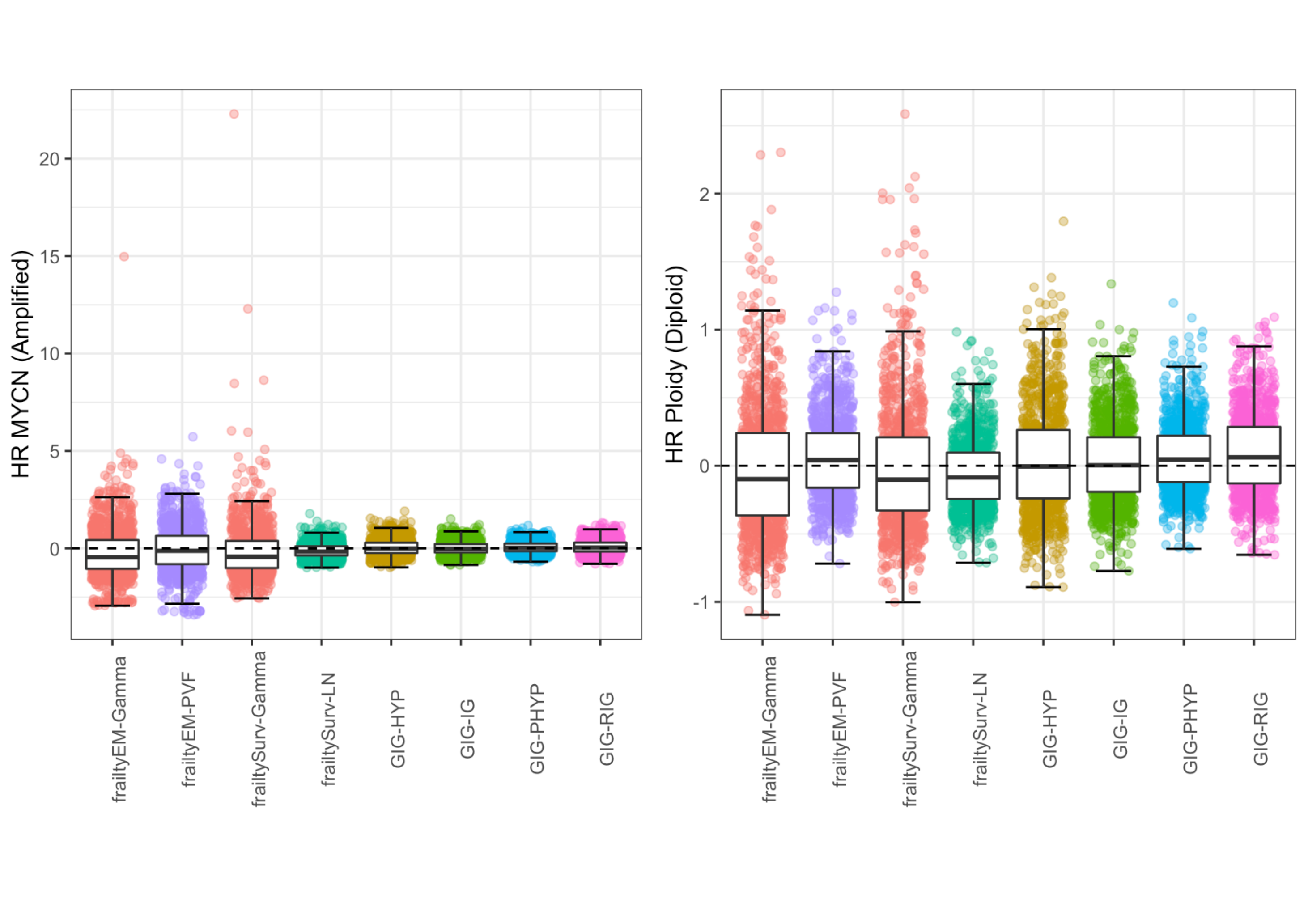} 
	\caption{Results of Bootstrap study for centered hazard ratios to the left. Plot to the right is just a zoom of the left one.}\label{boot_covs}
\end{figure}

In the left-side plot, we can see that there is a greater variability on the estimation of the hazard ratio of MYCN under the fits of the gamma and PVF frailties in comparison to the competing models. Meanwhile, the log-normal and GIG frailties behave well, which is evidenced by their narrower boxes that are concentrated closely to the "true" value. As it concerns Ploidy, the right-hand side plot shows that there is a less striking difference among the models under comparison. Although both gamma fits seem to produce more outliers, in general all the models are reasonably robust when estimating the effect of the Ploidy variable.

\begin{figure}[ht!]
	\centering
	\includegraphics[width=1\textwidth]{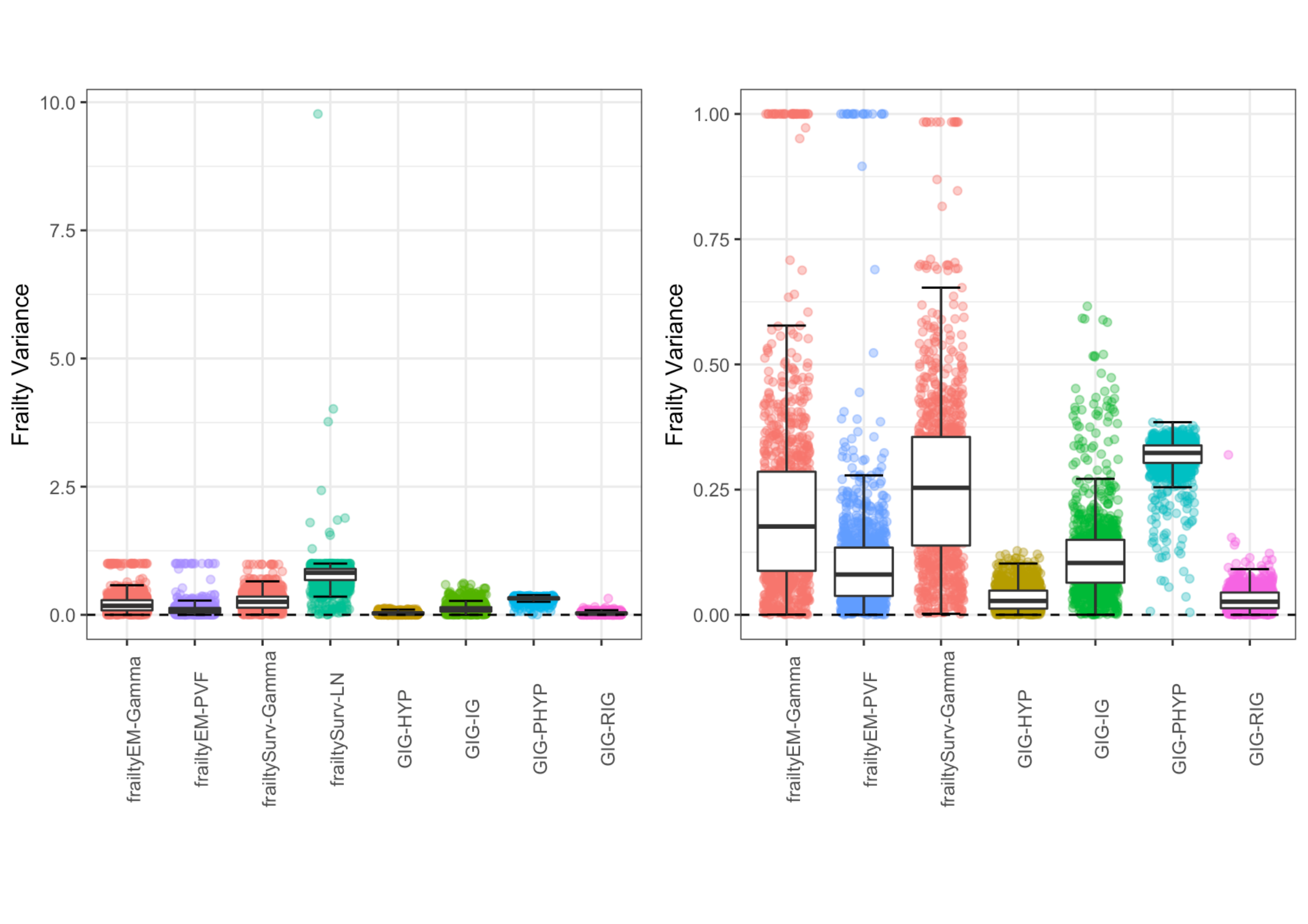} 
	\caption{Results of Bootstrap study for centered and scaled frailty variances for all models to the left. Plot to the right present the same results by excluding the log-normal model.}\label{boot_fvar}
\end{figure}

Figure \ref{boot_fvar} contains the result of this study for the frailty variance. In this case, we center and scale the frailty variance estimates around the "true" values to avoid any interference of the different scales. On the left-hand side, all the models under comparison are included, where the log-normal frailty clearly does not perform like the other ones. It displays high outlier values and is located further from zero then the other models. We filter out the log-normal boxplot from the right-side plot to be able to compare among the other cases. Now, we identify the gamma fits to be less consistent and can points advantages towards the HYP and RIG models whose boxplots are the narrowest and concentrated more closely to zero.

We conclude that the HYP and RIG special cases of the GIG family were the most robust models for estimating the three quantities of interest in the application to the TARGET neuroblastoma data set. They return consistent estimates of both the hazard ratios and frailty variance, something that is not achieved by the competing models. In sum, this study points out that the fits of the HYP or RIG frailty models to the complete data set should be preferred to draw conclusion about the covariate effects and the population heterogeneity in this neuroblastoma data study.

Having shown advantages towards the GIG class in this data application, we propose to make model selection among a broader range of GIG models by considering several values of $\lambda$. Hence, we present the selection of $\lambda$ based on a grid of values. The grid of $\lambda$ is set between $-5$ and $5$ with $0.1$ spacing. We calculate the value of the log-likelihood function for the set of parameter estimates obtained under each different $\lambda$.

\begin{figure}[ht!]
	\centering
	\includegraphics[width=.4\textwidth]{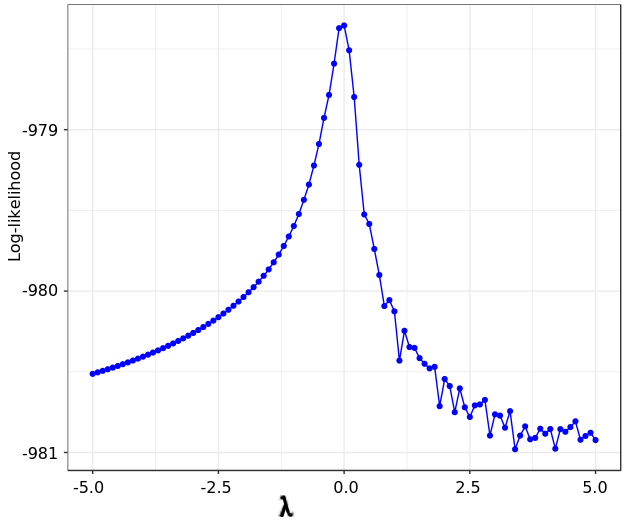} 
	\caption{Profile log-likelihood function of the parameter $\lambda$ for the neuroblastoma data.}\label{lambda_grid}
\end{figure}

The result is reported in Figure \ref{lambda_grid} where it is evident that the value $\lambda = 0$ achieves the highest log-likelihood value. The irregularity of the log-likelihood function on the right is most likely due to the numerical instability of the Bessel function, but this does not compromise our choice for $\lambda=0$ since the function is well behaved around this point. This result goes in accordance with the conclusions drawn from our Boostrap simulation study, where we pointed out the HYP($\lambda =0$) or RIG($\lambda = 0.5$) cases to be preferable.  Furthermore, by combining these results, we can conclude that the most suitable model to the data in terms of model selection is also a robust choice, as desired.

To summarise, our main goal in the application of frailty models to the TARGET neuroblastoma data set was to conclude about the effects of the two genetic covariates and patient heterogeneity. We fitted different frailty models where individual heterogeneity not explained by the covariates was captured by a random effect. A considerable variability in the results when using different frailty distributions was highlighted. We proposed to tackle this by checking the robustness of the estimates through a bootstrap study where we fit each model several times to samples of the original data set. By doing this, it is evidenced that two special cases of the GIG class produce consistent estimates with advantages over the others. Further, we use a profile-likelihood approach to select the most suitable $\lambda$ value. In this analysis, we used an extensive grid of $\lambda$ values to take into consideration a range of GIG models and not only its special cases. The result indicates $\lambda = 0$ to be preferred, which corresponds to the HYP frailty. At the end, we are able to show that the GIG frailty chosen through model selection via profile likelihood is also robust, as ideal.

Hence, we draw our conclusions using parameter estimates of the HYP frailty model. We have that the failure rate of patients with amplified MYCN is $1.721$ times that of patients with the unamplified status, conditional on the same frailty. In addition, the failure rate of patients with diploid chromosomes is $1.634$ times that of patients with hyperdiploid chromosomes, given the frailty value. Further, there is large effect of the patient individual heterogeneity, reflected by the frailty variance of 3.550. The statistical result on the effect of an amplified MYCN gene agrees with what is mentioned in \cite{yoshi}. According to the authors, the MYCN amplification has proven to be an independent prognostic factor for identifying rapid tumor progression and predicting poor prognosis independent of age and clinical stage. As for the presence of hyperdiploid chromosomes in the DNA content of tumor cells, this is a factor in which there is no consensus on its prognostic influence and seems to differ between different types of cancer. Through the analysis of the TARGET neuroblastoma data, we can see that there is evidence that the presence of diploid cells is a protective factor in the lifetime of children with high-risk neuroblastoma cancer, whose biological reasons, evidently, have yet to be investigated.

\section{Discussion}\label{discussion}

In this paper we proposed a flexible class of frailty models based on the generalized inverse-Gaussian (GIG) distributions, which enjoy mathematical tractability like the gamma frailty model. Further, we illustrated through Monte Carlo simulation that the GIG models are robust to the specification of the frailty distribution and by varying $\lambda$ we capture distinct behaviours of the random effect.

Through simulation studies, we were able to evaluate the behavior of the GIG class of frailty models under misspecification. The proposed model proved to be robust as it was able to yield satisfactory estimates of the covariate effects in all scenarios. Under gamma and GE artificial data, there was a particular case of our class that performed better than the correct specified model. It was also possible to obtain a particular case that performed properly when the true frailty was log-normal distributed; this is the scenario in which the competing models returned high biased estimates. In addition, we explored the fit of the GIG special cases under misspecification of $\lambda$. Our findings indicate that, although it affects the estimation of the frailty variance, misspecifying $\lambda$ does not influence largely the estimation of the fixed effects.

We highlight that, as expected, there is no single model that is appropriate in all situations. The frailties obtained by fitting the special cases of the GIG distribution showed different behaviours in the simulation studies. This evidences that they can capture distinct dependence structures providing the desired flexibility. We emphasize the advantage of having a robust and flexible model such as the GIG frailty in hand, because identifying the true frailty distribution is not a simple task in practical problems and using selection and diagnostic methods are still scarce in this field.

A complete statistical analysis was conducted comprising the most commonly used frailty models and the class of GIG frailties with the goal of producing reliable conclusions about the effects of the amplification of the MYCN gene and the presence of hyperdiploid chromosomes on the survival times of children with neuroblastoma cancer. After identifying that the fitted models yielded considerably different results, we compared the robustness of the parameter estimates among them. This was done via a Bootstrap study which showed the HYP and RIG special cases to be the most consistent. A profile likelihood approach was used to select $\lambda$, indicating the choice of the HYP frailty. Having agreement between the results from the robustness study and the profile likelihood approach, we use the fitted HYP frailty model to draw conclusions about the covariate effects and the population heterogeneity. We highlight that this real data application illustrates how our class can be advantageous in practical situations.

Future research includes applying GIG frailty models in other contexts such as current status data. Another point of interest to be attacked is to propose a time-varying GIG frailty model. One possibility of fulfilling this task is to assume a time-varying frailty given by $Z(t)\equiv w(t,Z)$, where $w(t,Z) = Z^{h_0(t)}$, for $t>0$, $Z$ follows a GIG distribution as described in this paper and $h_0(\cdot)$ denotes the baseline hazard function. This structure was introduced by \cite{enketal2014}, with $Z$ following a generalized gamma distribution. As argued by these authors, time-varying frailty models can be applied to study infectious diseases. We hope to explore this point and submit our findings in a future paper.

\section*{Acknowledgments}
We thank the National Cancer Institute (Office of Cancer Genomics) for granting us permission to use the TARGET Neuroblastoma Clinical data for publication.

\newpage

\section*{Appendix}

In this appendix, we provide some additional theoretical and simulated results.

\subsection*{Appendix A: Additional theoretical results}

Consider $\mbox{RFV}(s) = J''(-s/\mu)/J'(-s/\mu)^2$, where $J(s) = \log L(-s)$, $L(\cdot)$ is the Laplace transform and $\mu$ is the expected value of the GIG frailty distribution. The expressions required to calculate the RFV for the GIG frailty model are given by

\begin{eqnarray*} \label{rfv_d1}
	\frac{\partial J(s)}{\partial s} &=& \frac{\partial \log L(-s)}{\partial s} = \frac{\alpha^{-1}}{2(\alpha^{-1}(\alpha^{-1}-2s))^{3/2}K_{\lambda}(\sqrt{\alpha^{-1}(\alpha^{-1}-2s)})} \times \\ && \{  \alpha^{-1}(\alpha^{-1}-2s)K_{\lambda-1}(\sqrt{\alpha^{-1}(\alpha^{-1}-2s)}) \nonumber \\
	&&+2\lambda\sqrt{\alpha^{-1}(\alpha^{-1}-2s)}K_{\lambda}(\sqrt{\alpha^{-1}(\alpha^{-1}-2s)}) \nonumber\\ && +\alpha^{-1}(\alpha^{-1}-2s)K_{\lambda+1}(\sqrt{\alpha^{-1}(\alpha^{-1}-2s)})\}  \nonumber
\end{eqnarray*}
and 
\begin{eqnarray*} \label{rfv_d2}
	\frac{\partial J(s)^2}{\partial s^2}&=& 	\frac{\partial \log L(-s)^2}{\partial s^2}= \frac{1}{4(\alpha^{-1}-2s)^2\sqrt{\alpha^{-1}(\alpha^{-1}-2s)}K_{\lambda}(\sqrt{\alpha^{-1}(\alpha^{-1}-2s)})^2} \times \nonumber \\ && \{-(\alpha^{-1}(\alpha^{-1}-2s))^{3/2}K_{\lambda-1}(\sqrt{\alpha^{-1}(\alpha^{-1}-2s)})^2 +  2\alpha^{-1}(\alpha^{-1}-2s) \times \\ && [K_{\lambda}(\sqrt{\alpha^{-1}(\alpha^{-1}-2s)})-   \sqrt{\alpha^{-1}(\alpha^{-1}-2s)}K_{\lambda+1}(\sqrt{\alpha^{-1}(\alpha^{-1}-2s)})]  \times \nonumber \\ &&K_{\lambda-1}(\sqrt{\alpha^{-1}(\alpha^{-1}-2s)}) - 
	4\alpha^{-1}s\sqrt{\alpha^{-1}(\alpha^{-1}-2s)}K_{\lambda}(\sqrt{\alpha^{-1}(\alpha^{-1}-2s)})^2 +
	\nonumber \\ &&
	8\lambda \sqrt{\alpha^{-1}(\alpha^{-1}-2s)}K_{\lambda}(\sqrt{\alpha^{-1}(\alpha^{-1}-2s)})^2 + \nonumber \\ && 2\alpha^{-2}\sqrt{\alpha^{-1}(\alpha^{-1}-2s)} K_{\lambda}(\sqrt{\alpha^{-1}(\alpha^{-1}-2s)})^2+
	\nonumber \\ &&2\alpha^{-1}s\sqrt{\alpha^{-1}(\alpha^{-1}-2s)}K_{\lambda+1}(\sqrt{\alpha^{-1}(\alpha^{-1}-2s)})^2 - \nonumber\\ &&\alpha^{-2}\sqrt{\alpha^{-1}(\alpha^{-1}-2s)}K_{\lambda+1}(\sqrt{\alpha^{-1}(\alpha^{-1}-2s)})^2 + \nonumber\\ &&
	(\alpha^{-1}(\alpha^{-1}-2s))^{3/2}K_{\lambda-2}(\sqrt{\alpha^{-1}(\alpha^{-1}-2s)})K_{\lambda}(\sqrt{\alpha^{-1}(\alpha^{-1}-2s)})+\nonumber\\
	&&2\alpha^{-2}K_{\lambda}(\sqrt{\alpha^{-1}(\alpha^{-1}-2s)})K_{\lambda+1}(\sqrt{\alpha^{-1}(\alpha^{-1}-2s)})-
	\nonumber\\ &&4\alpha^{-1}sK_{\lambda}(\sqrt{\alpha^{-1}(\alpha^{-1}-2s)})K_{\lambda+1}(\sqrt{\alpha^{-1}(\alpha^{-1}-2s)})-  \nonumber \\
	&&2\alpha^{-1}s\sqrt{\alpha^{-1}(\alpha^{-1}-2s)}K_{\lambda}(\sqrt{\alpha^{-1}(\alpha^{-1}-2s)})K_{\lambda+2}(\sqrt{\alpha^{-1}(\alpha^{-1}-2s)})+\nonumber \\
	&&\alpha^{-2}\sqrt{\alpha^{-1}(\alpha^{-1}-2s)}K_{\lambda}(\sqrt{\alpha^{-1}(\alpha^{-1}-2s)})K_{\lambda+2}(\sqrt{\alpha^{-1}(\alpha^{-1}-2s)})\}.
\end{eqnarray*}

\begin{lemma} \label{lemma1}
	Let $\zeta(x) = K_\phi(\sqrt{x}) / x^{\phi/2}$, we have that $\dfrac{\partial^k}{\partial x^k} \zeta(x) =  \left(-\dfrac{1}{2} \right)^{k} \dfrac{K_{\phi+k} (\sqrt{x})}{x^{(\phi+k)/2}}$, for $k\in\mathbb N$.
\end{lemma}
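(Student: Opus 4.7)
The plan is to prove the identity by induction on $k$, reducing everything to a single differentiation identity for the modified Bessel function of the third kind. The key ingredient I would invoke is the standard recurrence
\[
\frac{d}{du}\bigl[u^{-\nu} K_\nu(u)\bigr] = -\,u^{-\nu} K_{\nu+1}(u),
\]
which is one of the classical derivative formulas for $K_\nu$. This identity is exactly what is needed because the function $\zeta(x) = K_\phi(\sqrt{x})/x^{\phi/2}$ has the form $u^{-\phi} K_\phi(u)$ with $u = \sqrt{x}$, so it fits the identity after a chain-rule substitution.

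First, I would establish the base case $k=1$. Writing $u=\sqrt{x}$, so that $u^{-\phi}=x^{-\phi/2}$ and $du/dx = 1/(2\sqrt{x})$, the chain rule combined with the recurrence above yields
\[
\frac{\partial}{\partial x}\zeta(x) = \frac{d}{du}\!\left[u^{-\phi}K_\phi(u)\right]\cdot\frac{du}{dx}
= -\,u^{-\phi}K_{\phi+1}(u)\cdot\frac{1}{2\sqrt{x}}
= -\frac{1}{2}\,\frac{K_{\phi+1}(\sqrt{x})}{x^{(\phi+1)/2}},
\]
which is exactly the claimed formula for $k=1$.

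Next, for the inductive step, I would assume the identity holds for some $k\ge 1$. The crucial observation is that the right-hand side $(-1/2)^k K_{\phi+k}(\sqrt{x})/x^{(\phi+k)/2}$ is itself of the form $(-1/2)^k \,\zeta_{\phi+k}(x)$, where $\zeta_{\phi+k}$ denotes the analogue of $\zeta$ with the index shifted from $\phi$ to $\phi+k$. Applying the already-established base case to $\zeta_{\phi+k}$ gives
\[
\frac{\partial^{k+1}}{\partial x^{k+1}}\zeta(x)
= \left(-\frac{1}{2}\right)^{k}\frac{\partial}{\partial x}\,\zeta_{\phi+k}(x)
= \left(-\frac{1}{2}\right)^{k+1}\frac{K_{\phi+k+1}(\sqrt{x})}{x^{(\phi+k+1)/2}},
\]
completing the induction.

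There is no real obstacle here; the only delicate point is isolating the correct Bessel recurrence and checking that the structure of $\zeta$ is preserved under one differentiation (with the index shifted by one and a factor of $-1/2$ accrued), which is precisely what makes the inductive step essentially a re-application of the base case.
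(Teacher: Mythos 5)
Your proof is correct and follows essentially the same route as the paper: induction on $k$, with the base case supplying the whole inductive step because $\zeta$ reproduces its own form (index shifted by one, factor $-1/2$ accrued) under differentiation. The only difference is cosmetic: you invoke the standard identity $\frac{d}{du}\bigl[u^{-\nu}K_\nu(u)\bigr]=-u^{-\nu}K_{\nu+1}(u)$ directly, whereas the paper derives the same cancellation by combining the symmetric derivative formula $K_\nu'(z)=-\tfrac12\bigl(K_{\nu+1}(z)+K_{\nu-1}(z)\bigr)$ with the three-term recurrence $K_\nu(z)=\tfrac{z}{2\nu}\bigl(K_{\nu+1}(z)-K_{\nu-1}(z)\bigr)$ at each stage.
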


\begin{proof}
	Case $k=1$.
	It can be shown that the derivative of the Bessel function with respect to its argument is
	\begin{eqnarray*}
		\frac{\partial}{\partial x}K_\phi(\sqrt{x}) = -\frac{1}{4\sqrt{x}}\left(K_{\phi+1}  (\sqrt{x})+ K_{\phi-1}  (\sqrt{x})\right).
	\end{eqnarray*}
	Using this expression, we have that
	\begin{eqnarray*}
		\frac{\partial}{\partial x}\zeta(x)= -\frac{1}{2} \left( \phi\frac{K_\phi(\sqrt{x})}{x^{(\phi/2 + 1)}} + \frac{K_{\phi+1}  (\sqrt{x})+ K_{\phi-1}  (\sqrt{x})  }{2x^{(\phi+1)/2}}    \right).     
	\end{eqnarray*}
	At this point, we apply the recurrence identity on the Bessel function previously mentioned,
	\begin{eqnarray} \label{bessel_id}
	K_v(z) = \frac{z}{2v}\left(K_{v+1}(z) - K_{v-1}(z)\right),
	\end{eqnarray}
	and the former derivative simplifies to
	\begin{eqnarray*}
		\frac{\partial}{\partial x}\zeta(x)=  -\frac{1}{2} \dfrac{K_{\phi+1}  (\sqrt{x})}{x^{(\phi+1)/2}}.
	\end{eqnarray*}
	Using this, we continue the demonstration by finding the second derivative of $\zeta(x)$ with respect to $x$. The result is as follows
	\begin{eqnarray*}
		\frac{\partial^2}{\partial x^2}\zeta(x)= -\frac{1}{2} \left( -\frac{K_{\phi+1}  (\sqrt{x})(\phi+1)}{2x^{(\phi+3)/2}} - \frac{K_{\phi+2}  (\sqrt{x})+ K_{\phi}  (\sqrt{x})     }{4x^{(\phi/2 +1)}}          \right).
	\end{eqnarray*}
	Using (\ref{bessel_id}), the second derivative simplifies to 
	\begin{eqnarray*}
		\frac{\partial^2}{\partial x^2}\zeta(x)=  \frac{1}{4} \frac{K_{\phi+2}  (\sqrt{x})}{x^{(\phi+2)/2}}.
	\end{eqnarray*}
	To finish the proof by induction, we now assume as true the case $k-1$ and use it to prove the $k$-th order expression. If this is satisfied, then the result is true for all $k$. If the $\{k-1\}$-th derivative of $\zeta(x)$ with respect to $x$ is given by
	\begin{eqnarray*}
		\frac{\partial^{k-1}}{\partial x^{k-1}}\zeta(x)=  \left( {-\dfrac{1}{2}}\right)^{k-1}  \frac{K_{\phi+k-1}  (\sqrt{x})}{x^{(\phi+k-1)/2}} ,
	\end{eqnarray*}
	then the $k$-th order derivative is
	\begin{eqnarray*}
		\frac{\partial^{k-1}}{\partial x^{k-1}}\zeta(x)=  \left( {-\dfrac{1}{2}}\right)^{k-1} \left\{  \frac{-K_{\phi+k-1}  (\sqrt{x})(\phi +k-1)}{2x^{(\phi+k+1)/2}} -\frac{K_{\phi+k}  (\sqrt{x})+ K_{\phi+k-2}  (\sqrt{x})  }{4x^{(\phi+k)/2}}           \right\}.
	\end{eqnarray*}
	Here we use again (\ref{bessel_id}), which provides $K_{\phi+k-1}(\sqrt{x}) = \frac{\sqrt{x}}{2(\phi+k-1)}[K_{\phi+k}(\sqrt{x})-K_{\phi+k-2}(\sqrt{x})]$
	and the result
	\begin{eqnarray*}
		\frac{\partial^k}{\partial x^k}\zeta(x)=  \left(-\dfrac{1}{2}\right)^{k} \frac{K_{\phi+k}  (\sqrt{x})}{x^{(\phi+k)/2}}.
	\end{eqnarray*}
	This completes the proof of Lemma 1.
\end{proof}

\subsection*{Appendix B: Additional simulated results}

Here, we present Monte Carlo simulations for the GIG frailty model under the quasi-semiparametric framework based on the piecewise exponential (PE) baseline hazard function to study model misspecification. The true frailty distribution is the generalized exponential (GE) and log-normal. In both scenarios we explore the sample sizes of $m = 200$ and $m = 500$ with clusters formed by $n_i = 2$ individuals each. The number of Monte Carlo replications is 1000. The simulation results for GE data are presented in Tables \ref{table_GE_200} and \ref{table_GE_500}, where we provide the empirical mean and the root of the mean squared errors (RMSE) of the estimates for the covariate effects and for the frailty variance. Results related to the log-normal scenario are exhibited in Tables \ref{table_LN_200} and \ref{table_LN_500}. The data generation procedure adopted here is described in Section 6 of the main paper.

\begin{table}[]
	\centering
	\caption{Empirical mean and root mean square error (RMSE) for $\beta_1$, $\beta_2$ and for the frailty variance of GE distributed frailty data with sample size $m=200$. Rows represent the fitted model. The real values of the parameters are $\beta_1=1.5$, $\beta_2=-1$ and $\alpha=1$, so real frailty variance is also 1.} \label{table_GE_200}
	\begin{tabular}{@{}lccccccc@{}}
		\toprule
		\multicolumn{2}{c}{\textbf{}}                                    & 	\multicolumn{2}{c}{\textbf{$\beta_1$}} & \multicolumn{2}{c}{\textbf{$\beta_2$}} & \multicolumn{2}{c}{\textbf{Var}} \\
		\multicolumn{1}{c}{\textbf{Model}}         & \textbf{Cut points} & \textbf{Mean}    & \textbf{RMSE}   & \textbf{Mean}    & \textbf{RMSE}   & \textbf{Mean}   & \textbf{RMSE}  \\ \midrule
		Semi. Gamma                       & -                   & 1.304            & 0.196           & $-0.869$           & 0.131           & 0.645           & 0.355          \\
		\\
		Semi. GE                          & -                   & 1.391            & 0.109           & $-0.924$           & 0.076           & 0.770           & 0.230          \\
		\\
		\multirow{2}{*}{PE - IG}   & $k=5$                 & 1.335            & 0.166           & $-0.885$           & 0.115           & 1.602           & 0.602          \\
		& $k=10$                & 1.407            & 0.093           & $-0.938$           & 0.062           & 2.064           & 1.064          \\
		\\
		\multirow{2}{*}{PE - RIG}  & $k=5$                 & 1.423            & 0.077           & $-0.938$           & 0.062           & 1.163           & 0.163          \\
		& $k=10$                & 1.483            & 0.017           & $-0.982$           & 0.018           & 1.257           & 0.257          \\
		\\
		\multirow{2}{*}{PE - HYP}  & $k=5$                 & 1.388            & 0.112           & $-0.917$           & 0.083           & 1.407           & 0.407          \\
		& $k=10$                & 1.448            & 0.052           & $-0.960$           & 0.040           & 1.595           & 0.595          \\
		\\
		\multirow{2}{*}{PE - PHYP} & $k=5$                 & 1.393            & 0.107           & $-0.921$           & 0.079           & 0.876           & 0.124          \\
		& $k=10$               & 1.426            & 0.074           & $-0.949 $          & 0.051           & 0.895           & 0.105          \\ \midrule
	\end{tabular}
\end{table}

When data are generated with GE frailty, we can see in Table \ref{table_GE_200} that the gamma model displays the higher RMSEs in the estimation of $\beta_1$ and $\beta_2$. All the GIG frailties show lower RMSEs than the gamma and GE cases. In addition, the correctly specified model and the gamma model underestimate the frailty variance. The PHYP and RIG frailties perform well in this scenario, estimating with low bias the frailty variance and the covariate effects. 

\begin{table}[]
	\centering
	\caption{Empirical mean and root mean square error (RMSE) for $\beta_1$, $\beta_2$ and for the frailty variance of GE distributed frailty data with sample size $m=500$. Rows represent the fitted model. The real values of the parameters are $\beta_1=1.5$, $\beta_2=-1$ and $\alpha=1$, so real frailty variance is also 1.}  \label{table_GE_500}
	\begin{tabular}{@{}lccccccc@{}}
		\toprule
		\multicolumn{2}{c}{\textbf{}}                                    & \multicolumn{2}{c}{\textbf{$\beta_1$}} & \multicolumn{2}{c}{\textbf{$\beta_2$}} & \multicolumn{2}{c}{\textbf{Var}} \\
		\multicolumn{1}{c}{\textbf{Model}}         & \textbf{Cut points} & \textbf{Mean}    & \textbf{RMSE}   & \textbf{Mean}    & \textbf{RMSE}   & \textbf{Mean}   & \textbf{RMSE}  \\ \midrule
		Semi. Gamma                       & -                   & 1.311            & 0.189           & $-0.876$           & 0.124           & 0.655           & 0.345          \\
		\\
		Semi. GE                          & -                   & 1.394            & 0.106           & $-0.928$           & 0.072           & 0.771           & 0.229          \\
		\\
		\multirow{2}{*}{PE - IG}   & $k=5$                & 1.397            & 0.103           & $-0.925$           & 0.075           & 3.683           & 2.683          \\
		& $k=10$                & 1.382            & 0.118           & $-0.921$           & 0.079           & 1.847           & 0.847          \\
		\\
		\multirow{2}{*}{PE - RIG}  & $k=5$                 & 1.417            & 0.083           & $-0.936  $         & 0.064           & 1.159           & 0.159          \\
		& $k=10$                & 1.479            & 0.021           & $-0.982$           & 0.018           & 1.259           & 0.259          \\
		\\
		\multirow{2}{*}{PE - HYP}  & $k=5$                 & 1.382            & 0.118           & $-0.914 $          & 0.086           & 1.376           & 0.376          \\
		& $k=10$                & 1.443            & 0.057           & $-0.957 $          & 0.043           & 1.565           & 0.565          \\
		\\
		\multirow{2}{*}{PE - PHYP} & $k=5$                 & 1.397            & 0.103           & $-0.925  $         & 0.075           & 0.892           & 0.108          \\
		& $k=10$                & 1.431            & 0.069           & $-0.954 $          & 0.046           & 0.910           & 0.090          \\ \midrule
	\end{tabular}
\end{table}

In Table \ref{table_GE_500} we can see that the same behaviour is observed and that increasing the sample size did not reduce significantly the RMSE in any of the model considered in this simulation study. We can conclude that in this scenario the most competitive model is the PHYP frailty, that had the smallest RMSE in the estimation of the frailty variance among all competing models. Nevertheless, all GIG frailties estimated the fixed effects with low bias.


In Table \ref{table_LN_200}, we present the simulation study for the configuration with log-normal frailty distribution. The total sample size is 400. Unlike the gamma and GE models, where $\alpha = 1$ implies in a frailty variance of 1, in the log-normal case the variance is approximately 1.718 when $\alpha = 1$. We refer again to \cite{barretosouza}, that noted in their work that the gamma and GE models considerably underestimate the frailty variance when the frailty distribution is log-normal. We observed in this scenario that the poorest performance is related to the gamma model. Note that this case not only displays the highest bias in the estimation of the frailty variance, but $\beta_1$ and $\beta_2$ also show RMSEs that are considerably larger than the competing frameworks.

\begin{table}[]
	\centering
	\caption{Empirical mean and root mean square error (RMSE) for $\beta_1$, $\beta_2$ and for the frailty variance of LN distributed frailty data with sample size $m=200$. Rows represent the fitted model. The real values of the parameters are $\beta_1=1.5$, $\beta_2=-1$ and $\alpha=1$, so real frailty variance is approximately 1.718.}  \label{table_LN_200}
	\begin{tabular}{@{}lccccccc@{}}
		\toprule
		\multicolumn{2}{c}{\textbf{}}                                    & \multicolumn{2}{c}{\textbf{$\beta_1$}} & \multicolumn{2}{c}{\textbf{$\beta_2$}} & \multicolumn{2}{c}{\textbf{Var}} \\
		\multicolumn{1}{c}{\textbf{Model}}         & \textbf{Cut points} & \textbf{Mean}    & \textbf{RMSE}   & \textbf{Mean}    & \textbf{RMSE}   & \textbf{Mean}   & \textbf{RMSE}  \\ \midrule
		Semi. Gamma                       & -                   & 1.335            & 0.165           & $-0.887 $          & 0.113           & 0.442           & 1.276          \\
		\\
		Semi. GE                          & -                   & 1.493            & 0.007           & $-0.988 $          & 0.012           & 0.739           & 0.979          \\
		\\
		\multirow{2}{*}{PE - IG}   & $k=5$                 & 1.400            & 0.100           & $-0.926 $          & 0.074           & 1.124           & 0.594          \\
		& $k=10$                & 1.464            & 0.036           & $-0.971$           & 0.029           & 1.421           & 0.297          \\
		\\
		\multirow{2}{*}{PE - RIG}  & $k=5$                 & 1.407            & 0.093           & $-0.928 $          & 0.072           & 0.782           & 0.936          \\
		& $k=10$                & 1.466            & 0.034           &$ -0.972$           & 0.028           & 0.884           & 0.834          \\
		\\
		\multirow{2}{*}{PE - HYP}  & $k=5$                 & 1.413            & 0.087           & $-0.933 $          & 0.067           & 0.947           & 0.771          \\
		& $k=10$                & 1.478            & 0.022           & $-0.979$           & 0.021           & 1.121           & 0.597          \\
		\\
		\multirow{2}{*}{PE - PHYP} & $k=5$               & 1.390            & 0.110           & $-0.918 $          & 0.082           & 0.648           & 1.070          \\
		& $k=10$                & 1.439            & 0.061           & $-0.955$           & 0.045           & 0.704           & 1.014          \\  \midrule
	\end{tabular}
\end{table}

In the log-normal frailty context, the model that best estimated the frailty variance was the inverse Gaussian. When $k = 10$, this model presents a significantly lower RMSE than all remaining contenders. The gamma case, which is the most commonly used in practice, estimated very poorly the true log-normal frailty distribution and also showed the highest RMSEs for $\beta_1$ and $\beta_2$.

Table \ref{table_LN_500} contains the results for an increased total sample size of $1000$. We can see that the gamma model remains not competitive in this scenario, displaying the highest errors in the estimation of all quantities under comparison. Furthermore, the IG frailty continued to show great advantage in the estimation of the frailty variance compared to the other models when adopting 10 change points.

\begin{table}[]
	\centering
	\caption{Empirical mean and root mean square error (RMSE) for $\beta_1$, $\beta_2$ and for the frailty variance of LN distributed frailty data with sample size $m=500$. Rows represent the fitted model. The real values of the parameters are $\beta_1=1.5$, $\beta_2=-1$ and $\alpha=1$, so real frailty variance is approximately 1.718.} \label{table_LN_500}
	\begin{tabular}{@{}lccccccc@{}}
		\toprule
		\multicolumn{2}{c}{\textbf{}}                                    & \multicolumn{2}{c}{\textbf{$\beta_1$}} & \multicolumn{2}{c}{\textbf{$\beta_2$}} & \multicolumn{2}{c}{\textbf{Var}} \\
		\multicolumn{1}{c}{\textbf{Model}}         & \textbf{Cut points} & \textbf{Mean}    & \textbf{RMSE}   & \textbf{Mean}    & \textbf{RMSE}   & \textbf{Mean}   & \textbf{RMSE}  \\ \midrule
		Semi. Gamma                       & -                   & 1.332            & 0.168           & $-0.891$           & 0.109           & 0.454           & 1.264          \\
		\\
		Semi. GE                          & -                   & 1.485            & 0.015           & $-0.988$           & 0.012           & 0.741           & 0.977          \\
		\\
		\multirow{2}{*}{PE - IG}   & $k=5$                 & 1.385            & 0.115           & $-0.921$           & 0.079           & 1.080           & 0.638          \\
		& $k=10$                & 1.451            & 0.049           & $-0.967$           & 0.033           & 1.366           & 0.352          \\
		\\
		\multirow{2}{*}{PE - RIG}  & $k=5$                 & 1.390            & 0.110           & $-0.922  $         & 0.078           & 0.774           & 0.944          \\
		& $k=10$                & 1.451            & 0.049           & $-0.965$           & 0.035           & 0.880           & 0.837          \\
		\\
		\multirow{2}{*}{PE - HYP}  & $k=5$                & 1.397            & 0.103           &$ -0.928 $          & 0.072           & 0.926           & 0.792          \\
		& $k=10$                & 1.463            & 0.037           & $-0.974$           & 0.026           & 1.100           & 0.618          \\
		\\
		\multirow{2}{*}{PE - PHYP} & $k=5$                 & 1.377            & 0.123           & $-0.914 $          & 0.086           & 0.652           & 1.066          \\
		& $k=10$                & 1.428            & 0.072           & $-0.952$           & 0.048           & 0.713           & 1.005          \\ \midrule
	\end{tabular}
\end{table}


\begin{thebibliography}{100}
	\expandafter\ifx\csname natexlab\endcsname\relax\def\natexlab#1{#1}\fi
	
	\bibitem[{Akaike(1974)}]{aic} 
	\textsc{Akaike, H.} (1974).
	\newblock{A new look at the statistical model identification}.
	\newblock{\textit{IEEE Transactions on Automatic Control.} }\textbf{19}, 716--723. 
	
	
	\bibitem[{Balakrishnan and Peng(2006)}]{balpen2006} 
	\textsc{Balakrishnan, N.} \& \textsc{Peng, Y.} (2006).
	\newblock{Generalized gamma frailty model}.
	\newblock{\textit{Statistics in Medicine.}} \textbf{25}, 2797--2816. 
	
	\bibitem[{Balakrishnan and Pal(2016)}]{balpal2016} 
	\textsc{Balakrishnan, N.} \& \textsc{Pal, S.} (2016).
	\newblock{Expectation maximization-based likelihood inference for flexible cure rate models with Weibull lifetimes}.
	\newblock{\textit{Statistical Methods in Medical Research.}} \textbf{25}, 1535--1563. 
	
	\bibitem[{Balan and Putter(2019)}]{frailtyEM} 
	\textsc{Balan, T.} \& \textsc{Putter, H.} (2019).
	\newblock{frailtyEM: An R Package for Estimating Semiparametric Shared Frailty Models}.
	\newblock{\textit{Journal of Statistical Software.}} \textbf{90}, 1--29.
	
	\bibitem[{Barndorff-Nielsen and Halgreen(1977)}]{barhal1977} 
	\textsc{Barndorff-Nielsen, O. E.} \& \textsc{Halgreen, C.} (1977).
	\newblock{Infinite divisibility of the Hyperbolic and generalized inverse Gaussian distribution}.
	\newblock{\textit{Z. Wahrscheinlichkeitstheorie verw. Geb.}} \textbf{38}, 309--312. 
	
	
	\bibitem[{Barreto-Souza and Mayrink(2019)}]{barretosouza} 
	\textsc{Barreto-Souza, W.} \& \textsc{Mayrink, V.D.} (2019).
	\newblock{Semiparametric generalized exponential frailty model for clustered survival data}.
	\newblock{\textit{Annals of the Institute of Statistical Mathematics.}} \textbf{71}, 679--701. 
	
	\bibitem[{Callegaro and Iacobelli(2012)}]{callegaro} 
	\textsc{Callegaro, A.} \& \textsc{Iacobelli, S.} (2012).
	\newblock{The Cox shared frailty model with log-skew-normal frailties}.
	\newblock{\textit{Statistical Modelling.}} \textbf{12}, 399--418.
	
	\bibitem[{Carroll(2013)}]{carroll} 
	\textsc{Carroll, W. L.} (2013).
	\newblock{Safety in numbers: hyperdiploidy and prognosis}.
	\newblock{\textit{Blood.}} \textbf{121}, 2374--237.
	
	\bibitem[{Chen et al.(2013)}]{chenetal2013} 
	\textsc{Chen, P.}, \textsc{Zhang, J.} \& \textsc{Zhang, R.} (2013).
	\newblock{Estimation of the accelerated failure time frailty model under generalized gamma frailty}.
	\newblock{\textit{Computational Statistics and Data Analysis.}} \textbf{62}, 171--180.
	
	
	\bibitem[{Clayton(1978)}]{clayton} 
	\textsc{Clayton, D.} (1978).
	\newblock{A model for association in bivariate life tables and its application in epidemiological studies of familial tendency in chronic disease incidence}. \newblock{\textit{Biometrika.}} \textbf{65}, 141--151.
	
	\bibitem[{Cox(1972)}]{cox72} 
	\textsc{Cox, D.} (1972).
	\newblock{Regression models and life-tables}.
	\newblock{\textit{Journal of the Royal Statistical Society - Series B.}} \textbf{34}, 187--220.
	
	\bibitem[{Crowder(1989)}]{cro1989} 
	\textsc{Crowder, M.} (1989).
	\newblock{A multivariate distribution with Weibull connections}.
	\newblock{\textit{Journal of the Royal Statistical Society - Series B.}} \textbf{51}, 93--107.
	
	
	\bibitem[{Dastugue et al.(2013)}]{dastague} 
	\textsc{Dastugue, N., Suciu, S., Plat, G., Speleman, F., Cavé, H., Girard, S., Bakkus, M., Pagès, M. P., Yakouben, K., Nelken, B., Uyttebroeck, A., Gervais, C., Lutz, P., Teixeira, M. R., Heimann, P., Ferster, A., Rohrlich, P., Collonge, M. A., Munzer, M., Luquet, I., Boutard, P., Sirvent, N., Karrasch, M., Bertrand, Y., \& Benoit, Y.} (2013).
	\newblock{Hyperdiploidy with 58-66 chromosomes in childhood B-acute lymphoblastic leukemia is highly curable: 58951 CLG-EORTC results}.
	\newblock{\textit{Blood.}} \textbf{121}, 2415--2423.
	
	\bibitem[{Dempster et al.(1977)}]{dempster} 
	\textsc{Dempster, A.P., Laird, N.M., Rubin, D.B.} (1977).
	\newblock{Maximum likelihood from incomplete data via the EM algorithm (with discussion)}.
	\newblock\textit{{Journal of the Royal Statistical Society - Series B.}} \textbf{39}, 1--38.
	
	\bibitem[{Donovan et al.(2014)}]{donovan} 
	\textsc{Donovan, P., Cato, K., Legaie, R., Jayalath, R., Olsson, G., Hall, B., Olson, S., Boros, S., Reynolds, B., \& Harding, A.} (2014).
	\newblock{Hyperdiploid tumor cells increase phenotypic heterogeneity within Glioblastoma tumors}.
	\newblock\textit{{Molecular bioSystems.}} \textbf{10}, 741--758.
	
	\bibitem[{Efron(1979)}]{efron} 
	\textsc{Efron, Bradley.} (1979).
	\newblock{Bootstrap methods: Another look at the Jackknife}.
	\newblock{\textit{The Annals of Statistics.}} \textbf{7}, 1--26.
	
	\bibitem[{Enki et al.(2014)}]{enketal2014} 
	\textsc{Enki, D.G.}, \textsc{Noufaily, A.} \& \textsc{Farrington, P.} (2014)
	\newblock{A time-varying shared frailty model with application to infectious diseases.}
	\newblock{\textit{Annals of Applied Statistics.}} \textbf{8}, 430--447.
	
	\bibitem[{Le\~ao et al.(2017)}]{leaetal2017} 
	\textsc{Le\~ao, J.}, \textsc{Leiva, V.}, \textsc{Saulo, H.} \& \textsc{Tomazella, V.} (2017)
	\newblock{Birnbaum-Saunders frailty regression models: Diagnostics and application to medical data.}
	\newblock{\textit{Biometrical Journal.}} \textbf{59}, 291--314.
	
	
	\bibitem[{Farrington et al.(2012)}]{farrington} 
	\textsc{Farrington, C., Unkel, S. \& Anaya-Izquierdo, K.} (2012).
	\newblock{The relative frailty variance and shared frailty models}.
	\newblock{\textit{Journal of the Royal Statistical Society - Series B.}} \textbf{74}, 673-696.
	
	\bibitem[{Fletcher(2000).}]{fletcher} 
	\textsc{Fletcher, R.} (2000).
	\newblock{Practical Methods of Optimization (2nd ed.)}.
	\newblock{New York: Wiley.}
	
	\bibitem[{Hanagal(2019)}]{han2019}
	\textsc{Hanagal, D.D.} (2019).
	\newblock{Modeling Survival Data Using Frailty Models}.
	\newblock{Singapore: Springer.}
	
	\bibitem[{Hougaard(1984)}]{houg1984} 
	\textsc{Hougaard, P.} (1984).
	\newblock{Life table methods for heterogeneous populations: Distributions describing the heterogeneity}.
	\newblock{\textit{Biometrika.}} \textbf{71}, 75--83.
	
	\bibitem[{Hougaard(1986)}]{houg1986} 
	\textsc{Hougaard, P.} (1986).
	\newblock{A class of multivariate failure time distributions}.
	\newblock{\textit{Biometrika.}} \textbf{73}, 671–678.
	
	\bibitem[{Hougaard(2000)}]{hou2000} 
	\textsc{Hougaard, P.} (2000).
	\newblock{Analysis of Multivariate Survival Data}.
	\newblock{New York: Springer.} 
	
	\bibitem[{Hougaard et al.(1992)}]{houetal1992} 
	\textsc{Hougaard, P.}, \textsc{Harvald, B.} \& \textsc{Holm, N.V.} (1992).
	\newblock{ Measuring the similarities between the lifetimes of adult
		danish twins born between 1881–1930}.
	\newblock{\textit{Biometrika.}} \textbf{87}, 17--24.
	
	\bibitem[Kaplan and Meier(1958)]{km} 
	\textsc{Kaplan, E.} \& \textsc{Meier, P.} (1958).
	\newblock{Nonparametric estimation from incomplete observations}.
	\newblock{\textit{Journal of the American Statistical Association.}} \textbf{53}, 457--481.
	
	\bibitem[{Kim and Proschan(1991)}]{kim} 
	\textsc{Kim, J.S.} \& \textsc{Proschan, F.} (1991).
	\newblock{Piecewise exponential estimation of the survival function}.
	\newblock{\textit{IEEE Translations on Reliability.}} \textbf{40}, 134--139.
	
	\bibitem[{Klein(1992)}]{klein} 
	\textsc{Klein, J.P.} (1992).
	\newblock{Semiparametric estimation of random effects using the Cox model based on the EM algorithm}.
	\newblock{\textit{Biometrics.}} \textbf{48}, 795--806.
	
	\bibitem[{Lawless and Zhan(1998)}]{lawless} 
	\textsc{Lawless, J.F.} \& \textsc{Zhan, M.} (1998).
	\newblock{Analysis of interval-grouped recurrent event data using piecewise constant rate function}.
	\newblock{\textit{Canadian Journal of Statistics.}} \textbf{26}, 549--565.
	
	\bibitem[{McGilchrist and Aisbett(1991)}]{mcg} 
	\textsc{McGilchrist, C.A.} \& \textsc{Aisbett, C.W.} (1991).
	\newblock{Regression with frailty in survival analysis}.
	\newblock\textit{{Biometrics.}}
	\textbf{47}, 461--466.
	
	\bibitem[{Molenberghs et al.(2015)}]{moletal2015} 
	\textsc{Molenberghs, G.}, \textsc{Verbeke, G.}, \textsc{Efendi, A.}, \textsc{Braekers, R.} \& \textsc{Dem\'etrio, C.} (2015).
	\newblock{A combined gamma frailty and 	normal random-effects model	for repeated, overdispersed time-to-event data}.
	\newblock\textit{{Statistical Methods in Medical Research.}}
	\textbf{24}, 434--452.
	
	\bibitem[{Monaco et al.(2018)}]{frailtySurv} 
	\textsc{Monaco, V.}, \textsc{Gorfine, M.}, \textsc{Hsu, L.} (2018) 
	\newblock{General semiparametric shared frailty model estimation and simulation with frailtySurv}.
	\newblock\textit{{Journal of Statistical Software.}}
	\textbf{86}, 10.18637/jss.v086.i04. 
	
	
	
	\bibitem[{R Core Team(2019)}]{R} 
	\textsc{R Core Team} (2019).
	\newblock{R: A language and environment for statistical computing.}.
	\newblock{R Foundation for Statistical Computing, Vienna, Austria.} https://www.R-project.org/. Accessed Aug 2019.
	
	\bibitem[{Therneau et al.(2003)}]{therneau} 
	\textsc{Therneau, T.M.}, \textsc{Grambsch, P.M.} \& \textsc{Pankratz, V.S.} (2003).
	\newblock{Penalized survival models.}
	\newblock{\textit{Journal of Computational and Graphical Statistics.}}
	\textbf{12}, 156--175.
	
	\bibitem[{Therneau(2015)}]{survival} 
	\textsc{Therneau, T.} (2015).
	\newblock{A package for survival analysis in S. R package version 2.43.3.}
	https://CRAN.R-project.org/package=survival. Accessed Aug 2019.
	
	\bibitem[{Oakes(1982)}]{oakes1} 
	\textsc{Oakes, D.} (1982).
	\newblock{A model for association in bivariate survival data}.
	\newblock{\textit{Journal of the Royal Statistical Society - Series B.}}
	\textbf{44}, 414--422.
	
	\bibitem[{Oakes(1986)}]{oakes2} 
	\textsc{Oakes, D.} (1986).
	\newblock{Semiparametric inference in a model for association in bivariate survival data}.
	\newblock{\textit{Biometrika.}}
	\textbf{73}, 353--361.
	
	\bibitem[{Putter and van Houwelingen(2015)}]{puthou2015} 
	\textsc{Putter, H.} \& \textsc{van Houwelingen, H.C.} (2015).
	\newblock{Dynamic frailty models based on compound birth-death processes}.
	\newblock{\textit{Biostatistics.}}
	\textbf{16}, 550--564.
	
	\bibitem[{Unkel and Farrington(2012)}]{unkfar2012} 
	\textsc{Unkel, S.}  \& \textsc{Farrington, C.P.} (2012).
	\newblock{A new measure of time-varying association for shared frailty models with bivariate current status data}.
	\newblock{\textit{Biostatistics.}}
	\textbf{13}, 665--679.
	
	\bibitem[{Unkel(2017)}]{unk2017} 
	\textsc{Unkel, S.} (2017).
	\newblock{On the shape of the cross-ratio function in bivariate survival models induced by truncated and folded normal frailty distributions}.
	\newblock{\textit{Metrika.}}
	\textbf{80}, 351--362.
	
	
	\bibitem[{Vaupel et al.(1979)}]{vaupel} 
	\textsc{Vaupel, J.W.}, \textsc{Manton, K.G.} \& \textsc{Stallard, E.} (1979).
	\newblock{The impact of heterogeneity in individual frailty on the dynamics of mortality}.
	\newblock{\textit{Demography.}}
	\textbf{16}, 439--454.
	
	\bibitem[{Wang and Klein(2012)}]{klein2012} 
	\textsc{Wang, H.} \& \textsc{Klein, J.P.} (2012).
	\newblock{Semiparametric estimation for the additive inverse Gaussian frailty model}.
	\newblock{\textit{Communications in Statistics - Theory and Methods.}}
	\textbf{41}, 2269--2278.
	
	\bibitem[{Wienke(2011)}]{wienke}
	\textsc{Wienke, A.} (2011).
	\newblock{Frailty Models in Survival Analysis}.
	\newblock{New York: Chapman and Hall/CRC Biostatistics Series.}
	
	\bibitem[{Yashin et al.(1995)}]{yashin} 
	\textsc{Yashin, A.}, \textsc{Vaupel, J.W.} \& \textsc{Iachine, I.} (1995).
	\newblock{Correlated individual frailty: An advantageous approach to survival analysis of bivariate data}.
	\newblock{\textit{Mathematical Population Studies.}}
	\textbf{5}, 145--159.
	
	\bibitem[{Yoshimoto et al.(1999)}]{yoshi} 
	\textsc{Yoshimoto, M., Toledo, S., Caran, E., Seixas, M. Lee, M., Abib, S., Vianna, S., Schettini, S., \& Andrade, J.} (1999).
	\newblock{MYCN gene amplification: Identification of cell populations containing double minutes and homogeneously staining regions in neuroblastoma tumors}.
	\newblock{\textit{The American Journal of Pathology.}}
	\textbf{155}, 1439--1443.
	
\end{thebibliography}

\end{document}